\documentclass[11pt]{article}

\usepackage{sn-preamble} 
\usepackage{tikz}
\usepackage{verbatim}
\usepackage{cancel}
\usetikzlibrary{shapes.geometric}
\usetikzlibrary{patterns}

\newcommand{\ignore}[1]{}

 % to align the subscripts of sup and inf

\newcommand{\Shift}{\textsc{Shift-Tester}}
\newcommand{\z}[1]{z^{(#1)}}
\newcommand{\dtv}{\mathrm{dist}_{\mathrm{TV}}}
\newcommand{\Dyes}{\calD_{\mathrm{yes}}}
\newcommand{\Dno}{\calD_{\mathrm{no}}}
\newcommand{\yes}{\mathrm{yes}}
\newcommand{\no}{\mathrm{no}}

% meta
\title{Testing Sumsets is Hard \vspace{0.5em}}
\author{
Xi Chen\\ 
{Columbia University}\\
\href{mailto:xc2198@columbia.edu}{\texttt{ xc2198@columbia.edu}}
\and 
Shivam Nadimpalli\\
{Columbia University}\\
\href{mailto:sn2855@columbia.edu}{\texttt{sn2855@columbia.edu}}
\and 
Tim Randolph\\
{Columbia University}\\
\href{mailto:t.randolph@columbia.edu}{\texttt{t.randolph@columbia.edu}}
\and 
Rocco A. Servedio\\
{Columbia University}\\
\href{mailto:ras2105@columbia.edu}{\texttt{ras2105@columbia.edu}}
\and 
Or Zamir\\
{Tel Aviv University}\\
\href{mailto:orzamir@tauex.tau.ac.il}{\texttt{orzamir@tauex.tau.ac.il}}
\vspace{0.5em}
}
\date{}

\begin{document}

\pagenumbering{gobble}
\maketitle  

\begin{abstract}
    A subset $S$ of the Boolean hypercube $\F_2^n$ is a \emph{sumset} if $S = \{a + b : a, b\in A\}$ for some $A \subseteq \F_2^n$. Sumsets are central objects of study in additive combinatorics, featuring in several influential results.
    We prove a lower bound of $\Omega(2^{n/2})$ for the number of queries needed to test whether a Boolean function $f:\F_2^n \to \{0,1\}$ is the indicator function of a {sumset}. Our lower bound for testing sumsets follows from sharp bounds on the related problem of \emph{shift testing}, which may be of independent interest. We also give a near-optimal {$2^{n/2} \cdot \poly(n)$}-query algorithm for a smoothed analysis formulation of the sumset \emph{refutation} problem.
\end{abstract}

\newpage
\pagenumbering{arabic}

%!TEX root = ../testing-sumsets.tex

\section{Introduction}
\label{sec:intro}

In recent years, theoretical computer science has increasingly been influenced by ideas and techniques from \emph{additive combinatorics}, a field sitting at the intersection of combinatorics, number theory, Fourier analysis, and ergodic theory. Notable examples of this connection include communication complexity~\cite{cfl,rank-comm,ghz}, constructions of randomness extractors~\cite{Bourgain05,biw,rao2007exposition,dvir}, and property testing~\cite{Green05,sam-gowers}; we also refer the reader to various surveys on additive combinatorics from the vantage point of theoretical computer science~\cite{additive-comb-minicourse,Trevisan09,Viola11ac,Bibak13,Lovett17actcs}.

Among the simplest objects of study in additive combinatorics are \emph{sumsets}: A subset $S$ of an abelian group $G$ (with group operation ``$+$'') is said to be a \emph{sumset} if $S=A+A$ for some $A \subseteq G$, where for sets $A,B \subseteq G$ we write $A+B$ to denote the set $\{a+b: a \in A, b \in B\}$.
Sumsets play a major role in additive combinatorics, where their study has led to many questions and insights about the additive structure of subsets of abelian groups. They are the subject of touchstone results in the field such as Freiman's theorem \cite{freiman1973foundations}, which (roughly speaking) says that if  $|A + A|$ is ``not too much larger'' than $|A|$ then $A$ must be contained in a generalized arithmetic progression which is ``not too large.'' 

Our interest in this paper is in \emph{algorithmic} questions related to sumsets.  There has been some prior research on sumset-related problems in the context of exact and parameterized algorithms. In \cite{fagnot2009finding} Fagnot, Fertin, and Vialette  considered the \textsc{2-Sumset Cover} problem: given a set $S$ of integers, does there exist a set $A$ of cardinality at most $k$ such that $S \subseteq A+A$? They proved $APX$-hardness for this problem and presented a $\poly(k) \cdot 5^{k^2(k+3)/2}$-time algorithm. The latter~was  improved to $\poly(k) \cdot 2^{(3 \log k - 1.4)k}$ by Bulteau, Fertin, Rizzi, and Vialette \cite{bulteau2015some}. \textsc{2-Sumset Cover} itself specializes \textsc{Generating Set}, in which the goal is to find a minimal set $A$ such that $S \subseteq \{\sum_{i \in I} i\; ; \; I \subseteq A\}$ and which was studied in \cite{collins2007nonnegative}. Given $S$ and $k$, finding a set $A$ of size $|A| \geq k$ with $A + A \subseteq S$ is equivalent to finding a $k$-Clique on the Cayley sum graph of $S$; this problem remains NP-hard, but can be solved with existing algorithms for $k$-clique \cite{godsil2017hardness}.

Apart from the works mentioned in the previous paragraph, computational questions pertaining to sumsets---{and more broadly, to objects and properties arising naturally in additive combinatorics}---do not appear to have been well studied to date.  In the list \cite{CrootLev07} of open problems in additive combinatorics compiled by Croot and Lev fifteen years ago, only one of a dozen questions concerning sumsets deals with algorithmic aspects (Problem~4.11, ``Recognizing sumsets algorithmically''), and the only subsequent reference to this problem that we have been able to identify is a 2020 MathOverflow post \cite{2020mathoverflowsumsets}. In \cite{AGU10}, Alon, Granville and Ubis ask whether it is possible to ``quickly identify'' subsets of $\F_p$ which are of the form $A+B$ with $|A|,|B| \geq k$, or  which have ``many representations'' as $A+B$, but we are not aware of subsequent research addressing these questions. We find this lack of study to be surprising given the increasingly prominent role of additive combinatorics in theoretical computer science (as before, see the expositions and surveys by \cite{Trevisan09,Viola11ac,Bibak13,Lovett17actcs}) and the fact that many such questions are both simple and natural.

\subsection{This Work}

In this paper we restrict our attention to the case in which the ambient abelian group $G$ is $\F_2^n$.  We do this for several reasons: For one thing, our focus is on algorithmic problems, and $\F_2^n$ is a very natural domain to consider from a theoretical computer science perspective.  Another motivation is that $\F_2^n$ is in some sense the simplest setting for many problems involving sumsets;  as Green stated in \cite{Green05}, ``the reason that finite field models are nice to work with is that one has the tools of linear algebra, including such notions as subspace and linear independence, which are unavailable in general abelian groups.'' (Indeed, several of our arguments will make use of these linear-algebraic tools.)

Since $\F_2^n$ is an exponentially large domain, it is natural to approach the study of sumsets over $\F_2^n$ from the vantage point of \emph{sublinear} algorithms. Thus, we will be interested in algorithms for which either the running time or the number calls to an oracle for the input set $S$ (i.e.~queries of the form ``does element $x$ belong to the set $S$?'' is less than $2^n$.  The recent work \cite{DNS22soda} took such a sublinear-algorithms perspective; it studied a problem which was closely related to the problem of approximating the size of the sumset $A+A$, given access to an oracle for the unknown set $A \subseteq \F_2^n$.  The main result of \cite{DNS22soda} was that in fact $O_\eps(1)$ queries---in particular, with no dependence on the dimension parameter $n$---are sufficient for a $\pm \eps \cdot 2^n$-accurate approximation of the quantity that they consider.  This naturally motivates the following broad question: What other algorithmic problems involving sumsets may be solvable with ``constant'' (depending only on $\eps$) or very low query complexity?

Motivated by this general question, in the current work we study a number of algorithmic questions related to sumsets.  The main problems we consider are described below:

\begin{enumerate}

\item We study (approximate) sumset recognition from a property testing perspective.    In more detail, given access to a membership oracle for an unknown set $S \subseteq \F_2^n$, in the \emph{sumset testing problem} the goal is to output ``yes'' with high probability (say, at least 9/10) if $S$ is a sumset and ``no'' with high probability if $S$ is $\eps$-far from every sumset (i.e.~$|S \, \triangle \,(A+A)| \geq \eps 2^n$ for every set $A \subseteq \F_2^n$), while making as few queries to the oracle as possible.

\item The above-described sumset testing problem turns out to be closely related to the problem of \emph{shift testing}, which is defined as follows: A shift testing algorithm is given black-box access to two oracles ${\cal O}_A,{\cal O}_B: \F_2^n \to \{0,1\}$, which should be viewed as membership oracles for two subsets $A,B \subseteq \F_2^n$.  The algorithm must output ``yes'' with probability at least $9/10$ if $B = A + \{z\}$ for some string $z \in \F_2^n$ and must output ``no'' with probability at least $9/10$ if the symmetric difference $B \, \triangle \, (A+\{z\})$ has size at least $\eps 2^n$ for every $z \in \F_2^n$.
% \xnote{I feel that the problem is already clear in the description above so there is no need to define it again here. If it helps we can add ''$|B \, \triangle \, (A+\{z\})|\ge \eps 2^n$ for every $z\in \mathbb{F}_2^n$'' to the description above.}

\item For $S \subseteq \F_2^n$, let $\bN_\eps(S)$ denote a random set which is an ``$\eps$-noisy'' version of $S$, obtained by flipping the membership\hspace{0.06cm}/\hspace{0.06cm}non-membership of each $x \in \F_2^n$ in $S$ with probability $\eps$.  It can be shown that for every $S \subseteq \F_2^n$ and every constant $0<\eps<1$, the noisy set $\bN_\eps(S)$ is with high probability not a sumset.  We study the problem of algorithmically \emph{certifying} that $\bN_\eps(S)$ is not a sumset; i.e.~we are given access to a membership oracle for $\bN_\eps(S)$, where $S$ is an arbitrary and unknown subset of $\F_2^n$, and the goal is to output a set $C \subseteq \F_2^n$ of points such that there is no sumset $A+A$ for which $(A+A) \cap C = \bN_\eps(S) \cap C.$  We refer to this problem as the \emph{smoothed sumset refutation problem,} since it aligns with the well-studied framework of smoothed analysis \cite{ST01} in which an arbitrary worst-case instance is subjected to a mild perturbation.

\end{enumerate}

Our main results for these problems are as follows.

\medskip

\noindent {\bf Sumset Testing Lower Bound.}
We give an $\Omega(2^{n/2})$ lower bound on the query complexity of sumset testing:

\begin{theorem} \label{thm:sumset-testing-lower-bound}
There is a constant $\eps > 0$ (independent of $n$) such that any algorithm ${\cal A}$ for the $\eps$-sumset testing problem must make $\Omega(2^{n/2})$ oracle calls.
%
%Let ${\cal A}$ be any algorithm with the following performance guarantee:  ${\cal A}$ is given access to a membership oracle for an unknown and arbitrary subset $S \subseteq \F_2^n$, and (i) if $S$ is a sumset then with probability at least $9/10$ algorithm ${\cal A}$ outputs ``sumset,'' while (ii) if $S$ is $\eps$-far from every sumset then with probability at least $9/10$ ${\cal A}$ outputs ``far from sumset'' (here $\eps>0$ is some fixed constant\rnote{Do we want to say something eventually about what constant we can achieve?}).  Then ${\cal A}$ must make at least $\Omega(2^{n/2})$ oracle calls.
\end{theorem}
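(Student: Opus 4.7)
The plan is to prove the theorem via Yao's minimax principle, by exhibiting distributions $\Dyes$ and $\Dno$ over subsets of $\F_2^n$ such that a sample from $\Dyes$ is a sumset with high probability, a sample from $\Dno$ is $\eps$-far from every sumset with high probability, and no deterministic $o(2^{n/2})$-query algorithm can distinguish the two distributions with nonnegligible advantage. Since the excerpt notes that the sumset lower bound will flow through the related \Shift{} problem, my plan is to first establish a matching $\Omega(2^{n/2})$ lower bound for shift testing and then reduce shift testing to sumset testing.

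For the \Shift{} lower bound, I would consider the pair of distributions over $(A,B) \subseteq \F_2^n \times \F_2^n$ in which the yes-distribution samples $A$ as a uniformly random subset of constant density $\rho$, samples $z$ uniformly from $\F_2^n$, and sets $B = A + z$, while the no-distribution draws $A$ and $B$ independently and uniformly at the same density. A short concentration-plus-union-bound argument shows that no-instances are $\eps$-far from the shift-yes class with high probability, since there are only $2^n$ shifts of a given $A$ and a uniform $B$ has expected symmetric difference $2\rho(1-\rho)\cdot 2^n$ with each of them. Indistinguishability then follows by a birthday-style collision argument: for a deterministic algorithm issuing queries $x_1,\dots,x_{q_A}$ to $A$ and $y_1,\dots,y_{q_B}$ to $B$, the answer sequences in the yes- and no-distributions are identically distributed so long as no collision $x_i = y_j + z$ occurs; over the uniform choice of $z$ the probability of such a collision is at most $q_A q_B/2^n \le q^2/2^n = o(1)$ whenever $q = o(2^{n/2})$. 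A standard adaptive coupling shows the same bound when queries depend on previous answers.

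To derive the sumset testing lower bound I would build a reduction that turns any shift-testing instance $(A,B)$ into a subset $S \subseteq \F_2^{n+O(1)}$, together with a procedure that answers each membership query to $S$ using $O(1)$ queries to $A$ and $B$, such that (i) $S$ is a sumset whenever $B = A + z$ for some $z$, and (ii) $S$ is $\eps'$-far from every sumset for some constant $\eps' > 0$ whenever $B$ is $\eps$-far from every shift of $A$. A natural candidate is to place $A$ and $B$ on two different cosets of a codimension-one subspace and augment with a small fixed ``scaffolding'' set $T$ that forces the sumset structure exactly when a valid shift exists, for instance $S = (\{0\} \times A) \cup (\{1\} \times B) \cup T \subseteq \F_2^{n+1}$; here the intuition is that if $C = (\{0\} \times A') \cup \{(1,w)\}$ then $C+C$ restricts on the two cosets to $A'+A'$ and to the shift $A' + w$.

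The main technical obstacle I anticipate lies in property (ii): ensuring that in the no-case the constructed $S$ is far from \emph{every} sumset rather than only from the specific one that a valid shift would produce. This will require a structural analysis of arbitrary decompositions $S = C + C$, showing that any such decomposition forces the two ``halves'' of $C$ (its restrictions to the two cosets of $\F_2^{n+1}$) into a form that in turn forces $B$ to be close to some shift of $A$---contradicting the no-case assumption. Combined with the shift testing lower bound, this reduction yields the $\Omega(2^{n/2})$ bound of Theorem~\ref{thm:sumset-testing-lower-bound}.
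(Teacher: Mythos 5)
Your overall architecture matches the paper exactly: Yao's minimax, the same yes/no distributions for shift testing, a birthday-collision indistinguishability argument via deferred decisions, and a reduction embedding shift testing into sumset testing. The shift-testing lower bound portion of your proposal is essentially correct.

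The gap is in the reduction. Your candidate embedding into $\F_2^{n+1}$, $S = (\{0\}\times A) \cup (\{1\}\times B) \cup T$, cannot make the yes-case into a sumset. Write $C_0, C_1$ for the intersections of a putative root $C$ with the two cosets of $\F_2^{n+1}$. Then $(C+C)$ restricted to the $\{0\}$-coset is $(C_0+C_0)\cup(C_1+C_1)$, a union of two sumsets, and this must match $S$'s $\{0\}$-coset, which is essentially $A$ if $T$ is small. But in the yes-case, $A$ is a uniformly random density-$1/2$ subset of $\F_2^n$, which is almost never (close to) a union of two sumsets. So $S$ is not a sumset, and the reduction's ``completeness'' fails --- this is the obstruction, not (as you anticipate) only the no-case soundness. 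No $O(1)$-size scaffolding $T$ helps, because it cannot meaningfully change the $\{0\}$-coset.

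The paper circumvents this by going to $\F_2^{n+2}$ and using all four cosets of $\{x: x_1=x_2=0\}$: the $(0,0)$-coset is made \emph{entirely full} (so whatever $A'+A'$ produces lands harmlessly there), $A$ lives on the $(1,0)$-coset, $B$ on the $(0,1)$-coset, and the $(1,1)$-coset is empty except, in the yes-case, a single point $(1,1,s)$ encoding the shift. With root $C=\{0^{n+2}\}\cup\{(1,0,a):a\in A\}\cup\{(1,1,s)\}$ one gets $C+C=\calS(A,A+s)\cup\{(1,1,s)\}$ whenever $A+A=\F_2^n$, which holds with overwhelming probability. For the no-case, the paper avoids the structural analysis of $S=C+C$ you sketch and instead runs a counting argument: it bounds the number of sumsets consistent with the scaffolding (``$\eps$-eligible'' sumsets) and shows the support of $\calS_\no$ is too large to be covered by small Hamming balls around them. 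Your anticipated structural route is not obviously wrong, but the paper's counting argument is cleaner and sidesteps the case analysis.
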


We note that \Cref{thm:sumset-testing-lower-bound} holds even for adaptive testers which may make two-sided error.

\medskip

\noindent {\bf Tight Bounds for Shift Testing.} We show that the query complexity of shift testing is ${\Theta}^*(2^{n/2})$.

\begin{theorem} \label{thm:shift-testing}
(1)
There is an algorithm for the shift testing problem which makes $O(n2^{n/2}/\eps)$ oracle calls and runs in time $\poly(n) \cdot 2^n/\eps$.  Moreover,
(2) For any constant $0<c<1/2$, any algorithm for the shift testing problem must make $\Omega(2^{n/2})$ oracle calls, even for $\eps = (1/2 - 1/2^{cn})$.
This lower bound holds even for distinguishing the following two cases:  (i) $A$ is a uniform random subset of $\F_2^n$ and $B=A+\{z\}$ for a uniform random $z \in \F_2^n$; versus (ii) $A$ and $B$ are independent uniform random subsets of $\F_2^n$.
\end{theorem}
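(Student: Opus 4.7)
For part (1), the plan is a birthday-paradox style algorithm. Query both $\mathcal{O}_A$ and $\mathcal{O}_B$ at $m = \Theta(\sqrt{n/\eps}\cdot 2^{n/2})$ i.i.d.\ uniformly random points each, obtaining labeled samples $(x_i, a_i)_{i=1}^m$ and $(y_j, b_j)_{j=1}^m$. For every candidate shift $z\in\F_2^n$, examine the set of matched pairs $P_z = \{(i,j) : y_j = x_i + z\}$ and check whether every pair in $P_z$ is label-consistent ($a_i = b_j$); accept iff some $z$ has an entirely consistent $P_z$. All $2^n$ candidates can be processed together by bucketing pairs $(i,j)\in[m]\times[m]$ according to $x_i + y_j$ via a hash table, giving total runtime $\poly(n)\cdot 2^n/\eps$ and query count $2m = O(n\cdot 2^{n/2}/\eps)$. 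Completeness is immediate, since the true shift $z^\ast$ yields a fully consistent $P_{z^\ast}$.

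For soundness, fix any $z$ and let $\mathrm{Bad}_z = \{x : \mathcal{O}_A(x) \neq \mathcal{O}_B(x+z)\}$, which has size at least $\eps 2^n$ when $B$ is $\eps$-far from every shift of $A$. A Chernoff bound gives $|X \cap \mathrm{Bad}_z| \geq \eps m/2$ with very high probability, and a standard bound on internal collisions of $X$ implies $|W_z| = \Omega(\eps m)$ for $W_z = \{x_i + z : x_i \in X\cap \mathrm{Bad}_z\}$. Conditional on $X$, the $y_j$'s are i.i.d.\ uniform, so the probability that no $y_j$ lands in $W_z$ is at most $(1 - |W_z|/2^n)^m \leq \exp(-\Omega(\eps m^2/2^n)) = \exp(-\Omega(n))$ (with a hidden constant made large by choosing $m$ with a large enough leading constant), small enough to survive a union bound over all $2^n$ candidates~$z$.

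For part (2), the plan is to apply Yao's minimax with the two distributions $\Dyes$, $\Dno$ from the theorem statement. First, verify that $\Dno$ is essentially supported on $\eps$-no instances: for any fixed $z$, $B$ and $A+\{z\}$ are independent uniform random subsets, so $|B\,\triangle\,(A+\{z\})|$ is distributed as $\mathrm{Binomial}(2^n, 1/2)$, and Chernoff gives $\Pr[|B\,\triangle\,(A+\{z\})| < \eps 2^n] \leq \exp(-\Omega(2^{(1-2c)n}))$ for $\eps = 1/2 - 2^{-cn}$; a union bound over $z\in\F_2^n$ (using $c<1/2$) shows that $B$ is $\eps$-far from every shift of $A$ with probability $1-o(1)$ under $\Dno$.

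The core of the lower bound is a coupling argument showing that for any $q$-query adaptive algorithm with $q = o(2^{n/2})$, the transcripts under $\Dyes$ and $\Dno$ have total variation distance $O(q^2/2^n) = o(1)$. Under $\Dno$, every answer is a fresh independent uniform bit (WLOG the algorithm never repeats a (oracle, point) query). Under $\Dyes$, an $\mathcal{O}_B$-query at $y$ is equivalent to an $\mathcal{O}_A$-query at $y+z$, so all queries become effective $\mathcal{O}_A$-queries, and these yield fresh uniform bits except when a ``collision'' occurs, i.e., $z = x + y$ for some prior $\mathcal{O}_A$-query point $x$ and $\mathcal{O}_B$-query point $y$. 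Reveal the answers lazily, coupling the two processes so that the transcripts agree until the first collision. At step $t$, conditional on no prior collision the transcript so far is independent of $z$, so $z$ remains uniform over a set of size $\geq 2^n - (t-1)^2$; a collision at step $t$ forces $z$ into one of at most $t-1$ values, yielding per-step collision probability $O(t/2^n)$ and total $\sum_{t=1}^q O(t/2^n) = O(q^2/2^n)$. The main obstacle in part (1) is obtaining a tail bound strong enough for the $2^n$-wide union bound (ruling out weaker Chebyshev-style arguments), which is handled by first conditioning on $X$ and then exploiting the independence of $Y$; in part (2), the main obstacle is cleanly handling adaptivity, which is resolved by the lazy coupling.
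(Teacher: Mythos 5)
Both parts of your proposal are correct, but they differ from the paper's proof in different ways.

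For Part (2), you take essentially the same route as the paper: a Yao's-principle reduction to the distributions $\Dyes$ and $\Dno$, a Chernoff-plus-union-bound calculation showing that under $\Dno$ the set $B$ is $(1/2 - 2^{-cn})$-far from every shift of $A$, and a ``deferred decisions'' analysis in which the hidden shift is only meaningful when a collision $z = q_t + q_{t'}$ occurs. Your per-step collision probability $O(t/2^n)$ (via the posterior of $z$ being uniform on $\geq 2^n - (t-1)^2$ points) and the total $O(q^2/2^n)$ bound match the paper's Claim~3.4 exactly. Phrasing this as an explicit coupling of transcripts rather than a single process that declares ``failure'' is a cosmetic difference.

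For Part (1), your algorithm is genuinely different from the paper's. The paper deterministically structures its queries: after drawing a uniform $\br$, it queries $\calO_A$ on the coset $D_1 + \br$ and $\calO_B$ on $D_2 + \br$, where $D_1$ and $D_2$ are complementary coordinate subspaces of dimensions $\lfloor n/2 \rfloor$ and $\lceil n/2 \rceil$. This ensures that each of the $n/\eps$ rounds yields exactly one probe of the pair $(\calO_A(\bx), \calO_B(\bx+z))$ for every candidate shift $z$ simultaneously, giving a clean unbiased estimator and a one-paragraph analysis. Your birthday-paradox alternative instead makes $m = \Theta(\sqrt{n/\eps}\cdot 2^{n/2})$ i.i.d. uniform queries to each oracle and exploits random collisions $y_j = x_i + z$. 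The analysis needs a bit more care — Chernoff on $|X \cap \mathrm{Bad}_z|$, accounting for duplicate $x_i$'s so that $|W_z| = \Omega(\eps m)$, and then conditioning on $X$ to get an exponential tail $\exp(-\Omega(\eps m^2/2^n)) = \exp(-\Omega(n))$ robust enough to survive the union bound over all $2^n$ shifts — but it is correct and in fact gives a sharper query complexity, $\Theta(\sqrt{n/\eps}\cdot 2^{n/2})$, improving the $n/\eps$ factor in the paper's $O(n 2^{n/2}/\eps)$ bound to $\sqrt{n/\eps}$ (you only claim the weaker bound to match the theorem statement). Both algorithms share the same $\poly(n)\cdot 2^n/\eps$-time processing step. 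One small point worth making explicit in a full write-up: since $P_z = \emptyset$ leads to vacuous acceptance, the soundness argument must (and yours does) show for every $z$ the existence of an inconsistent matched pair, not merely inconsistency conditional on some pair existing.
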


Like \Cref{thm:sumset-testing-lower-bound}, the lower bound (Part (2)) of \Cref{thm:shift-testing} holds even for adaptive testers which may make two-sided error.

\medskip

\noindent {\bf A Near-Optimal Algorithm for ``Smoothed'' Sumset Refutation.}  Our final result is a near-optimal algorithm which certifies that any noisy set $\bN_\eps(S)$ is not a sumset:

\begin{theorem} \label{thm:smoothed-sumset-refutation}
(1) There is an algorithm ${\cal A}$ for the $\eps$-smoothed sumset refutation problem that queries a set $\bB$ of $2^{n/2} \cdot O(n^{1.5}/\epsilon^{1.5})$ points with the property that $\bB \cap \bN_\epsilon(S)$ is inconsistent with every sumset with probability $1 - o_n(1)$. Moreover,  (2) Any algorithm for the $\eps$-smoothed sumset refutation problem, for any $\eps > 0$ must make $\Omega(2^{n/2}/\sqrt{n})$ many oracle calls.

%(1) There is an algorithm ${\cal A}$ with the following performance guarantee:  
% ${\cal A}$ is given access to a membership oracle for a set $S' \sim N_\eps(S)$, where $S \subseteq \F_2^n$ is unknown and arbitrary.  ${\cal A}$ makes $BLAH$ oracle calls and runs in time BLAHBLAH, and with probability at least $9/10$ (over the random draw of $S' \sim N_\eps(S)$ and the internal randomness of ${\cal A}$), ${\cal A}$ outputs a set $T \subseteq \F_2^n$ such that $T \cap N_\eps(S) \neq T \cap (A+A)$ for every $A \subseteq \F_2^n$.
%Moreover, (2) Any algorithm with the above performance guarantee must make $\Omega(BLAH)$ many oracle calls.
\end{theorem}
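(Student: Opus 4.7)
The two parts of the theorem are attacked separately but both hinge on controlling the combinatorics of the family $\mathcal{F} := \{A+A : A \subseteq \F_2^n\}$: the upper bound requires $\mathcal{F}$ to be much smaller than $2^{2^n}$, while the lower bound requires the restrictions $\mathcal{F}|_Q$ to essentially cover $\{0,1\}^Q$ for small $Q$.

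For part (1), the plan is to have the algorithm query a uniformly random $\bB \subseteq \F_2^n$ of size $m = 2^{n/2} \cdot \Theta(n^{1.5}/\eps^{1.5})$ and output $\bB \cap \bN_\eps(S)$. The analysis has two ingredients. First, I would prove the counting bound $|\mathcal{F}| \leq 2^{O(n^{1.5} \cdot 2^{n/2})}$ by showing that every sumset admits a representer of size $O(\sqrt n \cdot 2^{n/2})$; the natural approach is probabilistic subsampling of a given $A_0$ with $A_0 + A_0 = S$, possibly after first ``regularizing'' $A_0$ so that each $s \in S$ has enough pair representations. Second, for any fixed sumset $S'$ with relative distance $r = |S \triangle S'|/2^n$, the probability over the i.i.d.\ uniform $\bB$ and the noise that $S' \cap \bB = \bN_\eps(S) \cap \bB$ factors to $(1 - \eps - r(1-2\eps))^m \leq (1-\eps)^m \leq e^{-\eps m}$. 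Union-bounding over $\mathcal{F}$ gives a failure probability of at most $2^{O(n^{1.5} \cdot 2^{n/2})} \cdot e^{-\eps m}$, which is $o_n(1)$ for the stated $m$.

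For part (2), I would invoke Yao's principle against the distribution where $S \subseteq \F_2^n$ is uniformly random; since bitwise noise preserves uniformity, $\bN_\eps(S)$ is uniform for any $\eps > 0$, so the pattern observed on any adaptively chosen query set $Q$ of size $q$ is uniform on $\{0,1\}^Q$. The combinatorial heart is a shattering lemma: for every $Q$ with $|Q| \leq c \cdot 2^{n/2}/\sqrt n$ and every $\pi \subseteq Q$ satisfying ``$0 \in \pi$ whenever $0 \in Q$ and $\pi \neq \emptyset$,'' there exists $A$ with $(A+A) \cap Q = \pi$. I would prove this via the probabilistic construction $A = \bigcup_{y \in \pi \setminus \{0\}} \{a_y, a_y + y\}$, with $a_y \in \F_2^n$ independent uniform: by design $\pi \subseteq A + A$, and each ``bad'' atomic event ---a specific cross pair-sum $a_{y_1} + a_{y_2} + \delta_1 + \delta_2$ equals a specific $y^\star \in Q \setminus \pi$--- has probability $1/2^n$ and is independent of all but $O(|Q|^2)$ others, so the Lov\'asz Local Lemma applies with generous slack at the stated threshold. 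Since the $0$-constraint is satisfied by at least half of all uniform patterns on $Q$, the certifier fails with probability at least $1/2$, precluding $1 - o_n(1)$ success for $q \leq c \cdot 2^{n/2}/\sqrt n$.

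The step I expect to be most delicate is the small-representer lemma in part (1): naive subsampling fails precisely on Sidon-like sumsets where some $s \in A_0 + A_0$ has only one representation, so either a regularization (boosting the number of representations per element before subsampling) or a more careful direct construction, combined with the fact that Sidon sets in $\F_2^n$ already have size $O(2^{n/2})$, will be required. The Lov\'asz Local Lemma step in part (2) and the final union bound in part (1) should then be routine calculations.
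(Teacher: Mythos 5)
Your Part (1) has a fatal gap: the counting claim $|\mathcal{F}| \leq 2^{O(n^{1.5} \cdot 2^{n/2})}$ for the family $\mathcal{F}$ of sumsets in $\F_2^n$ is false. In fact the number of sumsets is $2^{2^{n-1}(1+o(1))}$. For a lower bound, identify $\F_2^n$ with $\F_2 \times \F_2^{n-1}$ and take $A_T := \{0\} \cup \{(1,t) : t \in T\}$ for each $T \subseteq \F_2^{n-1}$; then $(A_T + A_T) \cap \left(\{1\} \times \F_2^{n-1}\right) = \{1\} \times T$, so distinct $T$ yield distinct sumsets, giving at least $2^{2^{n-1}}$ of them (the matching upper bound is \Cref{prop:sumset-count}). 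This rules out your hoped-for ``small representer'' lemma outright. As a consequence a plain union bound over all sumsets against $m = 2^{n/2}\poly(n,1/\eps)$ random queries cannot succeed, because you need $\eps m \gtrsim \log_2|\mathcal{F}| \geq 2^{n-1}$, while $\eps m$ is only $\approx 2^{n/2}\poly(n)$. The paper's algorithm avoids this entirely: it queries a fixed subspace $V$ of dimension $\approx n/2$, then uses the decomposition $(A+A)\cap V = \bigcup_i (A_i + A_i)$ over cosets $A_i = A\cap(V+x_i)$ (\Cref{lem:zerocoset}) together with the fact that each $A_i$ must be an independent set in the Cayley graph $\Gamma_{\F_2^n}(V\setminus\bN_\eps(S))$, whose independence number is $O(n^2/\eps^2)$ w.h.p.\ (\Cref{lem:cayley,lemma:not-in-noisedS,lem:indep-number}). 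This collapses the candidate roots $A$ consistent with $\bN_\eps(S)$ on $V$ to only $2^{2^{n/2}\cdot\poly(n)/\eps^2}$ — and it is over \emph{that} much smaller family that a handful of extra random queries runs a union bound of the kind you envisioned.

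For Part (2), your Lov\'asz Local Lemma shattering argument is a genuinely different route from the paper's and looks plausible modulo edge cases (for $\pi = \{0\}$ your construction yields $A=\emptyset$ and hence $A+A=\emptyset \ne \pi$; take $A=\{a\}$ there). The paper simply invokes Alon's theorem that every $T \subseteq \F_2^n$ with $|T| \geq 2^n - \tfrac{1}{4000}\tfrac{2^{n/2}}{\sqrt{n}}$ is a sumset, so any labeling of a small set $D$ extends to a sumset by declaring all of $\F_2^n\setminus D$ to be ``in''; this gives the $\Omega(2^{n/2}/\sqrt{n})$ bound as a deterministic statement about $0$-certificate size, with no Yao's principle needed. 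Your LLL argument is more work, but if carried out carefully it would actually give the slightly stronger bound $\Omega(2^{n/2})$.
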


\subsection{Technical Overview} \label{subsec:techniques}

The main idea of our algorithm for shift testing (Part (1) of \Cref{thm:shift-testing}) is to query one oracle with all shifts of a random point $r$ by a subspace $V$, and query the other oracle by all shifts of the same point $r$ by the orthogonal complement $V^\perp$. This requires only $O(2^{n/2})$ queries, while providing information about the relationship between the two oracles vis-a-vis any possible shift $z \in \F_2^n$, since every possible shift has a decomposition into $z=z_1 + z_2$ for some $z_1 \in V, z_2 \in V^\perp.$

The optimality of this general approach is witnessed by the lower bound in Part (2) of \Cref{thm:shift-testing}.  The proof is by a ``deferred decisions'' argument which analyzes the knowledge transcript of a query algorithm which may be interacting either with the ``yes''-pair of oracles or the ``no''-pair of oracles.  We describe a coupling of the knowledge transcripts between these two cases, and use it to argue that if fewer than $0.1 \cdot 2^{n/2}$ queries have been made, then with high probability the transcripts are identically distributed across these two cases.  (See \cite{GoldreichRon02} for a similar high-level argument, though in an entirely different technical setting.)

The $\Omega(2^{n/2})$ lower bound of \Cref{thm:sumset-testing-lower-bound} for sumset testing is by a reduction to the lower bound for shift testing. We give a straightforward embedding of the ``$A$ is random, $B=A+\{z\}$''-versus-``$A,B$ are independent random'' shift testing problem over $\F_2^n$ into the problem of sumset testing over $\F_2^{n+2}$. The most challenging part of the argument is to prove that in fact the ``no'' instances of shift testing (when $A,B$ are independent random sets) give rise to instances which are far from sumsets over $\F_2^{n+2}$. This requires us to argue that a subset of $\F_2^{n+2}$ which is constant on two $n$-dimensional cosets and is uniform random on the other two $n$-dimensional cosets, is likely to be far from every sumset, which we prove using a linear algebraic argument.

For \Cref{thm:smoothed-sumset-refutation}, a result due to Alon establishes that every subset of the Boolean cube of size $2^n - c 2^{n/2} / \sqrt{n}$ is a sumset for a small constant $c$, which implies that any sumset ``$0$-certificate'' has size $\Omega^*(2^{n/2})$ \cite{Alon07}. To find such a certificate, we show that few noisy sumsets are likely to be consistent with an arbitrarily chosen subspace of dimension $n/2$, and then use a small random sample to rule out these sumsets with high probability.

\subsection{Discussion}
\label{subsec:related-work}

Our results suggest many questions and goals for future work; we record two such directions here.

The first direction is to obtain stronger results on sumset refutation.  
%A first goal in this regard might be to close (or narrow) the gap between our $\Omega(2^{n/2}/\sqrt{n})$ lower bound and our $2^{\tilde{O}(n/2)}$ upper bound on the size of ``certificates'' that noisy sets $\bN_\eps(S)$ are not sumsets.  Perhaps a more interesting goal is to
Is it possible to strengthen our sumset refutation result  by eliminating the ``smoothed analysis'' aspect, i.e.~is it the case that any $S\sse \F_2^n$ that is $\eps$-far from every sumset has a ``$0$-certificate'' of size~$2^{n/2} \cdot \poly\left(n, \frac{1}{\varepsilon}\right)$? If so, can such certificates be found efficiently given query access to $S$?

The second, and perhaps most compelling, direction is to either strengthen our $\Omega(2^{n/2})$-query lower bound, or prove an  upper bound, for the sumset testing problem.  We are cautiously optimistic that the true query complexity of sumset testing may be closer to $2^{n/2}$ queries than to $2^n$ queries, but any nontrivial ($o(2^n)$-query) algorithm would be an interesting result. One potentially relevant intermediate problem towards sumset testing is the problem of \emph{$k$-shift testing}, in which the goal is to determine whether oracles $\calO_A, \calO_B : \F_2^n\to\zo$ correspond to $B  = A + \{s_1, \ldots, s_k \}$ for some $k$ ``shift'' vectors $(s_{i})_{i\in[k]}$ versus $B$ being $\eps$-far from every union of $k$ shifts of $A$.

%!TEX root = ../testing-sumsets.tex

\section{Preliminaries}
\label{sec:prelims}

All probabilities and expectations will be with respect to the uniform distribution, unless otherwise indicated. We use boldfaced characters such as $\bx, \boldf$, and $\bA$ to denote random variables (which may be real-valued, vector-valued, function-valued, or set-valued; the intended type will be clear from the context).
We write $\bx \sim \calD$ to indicate that the random variable $\bx$ is distributed according to the probability distribution $\calD$. We write $\dtv(\calD_1,\calD_2)$ to denote the \emph{total variation distance} or \emph{statistical distance} between the distributions $\calD_1$ and $\calD_2$.

For $\epsilon \in [0, 1]$, we write $\bR_\varepsilon$ to denote a random subset of $\F_2^n$ obtained by selecting each element with probability $\epsilon$, so the ``$\eps$-noisy version'' of a set $S \subseteq \F_2^n$, denoted $\bN_\eps(S)$, is equivalent to $S \triangle \bR_{\eps}$,
where $A \triangle B := (A\setminus B) \cup (B\setminus A)$ denotes the symmetric difference of $A$ and $B$.

Given a set $A\sse\F_2^n$, we will write $\calO_A: \F_2^n\to\zo$ to denote the membership oracle for $A$, i.e. 
\[\calO_A(x) = \begin{cases}
 1 & x\in A\\ 0 	& x\notin A
 \end{cases}\] 
for $x\in\F_2^n$. Given $A, B\sse\F_2^n$, we write $\dist(A,B)$ for the normalized Hamming distance between the sets $A$ and $B$, i.e. 
\[\dist(A,B) := \frac{|A\triangle B|}{2^n} = \Prx_{\bx\sim\F_2^n}\sbra{\calO_A(\bx)\neq\calO_B(\bx)}.\]
We will also write 
$A + B := \cbra{ a+ b : a\in A, b\in B}.$
If one of the sets is a singleton, e.g. if $A = \{a\}$, we will sometimes write $a+B := \{a\} + B$ instead. 

We write $H(x)$ to denote the binary entropy function $-x \log_2 x - (1-x) \log_2(1-x)$. Stirling's approximation gives us the following helpful identity:
\begin{equation}\label{eq:stirling}
    \binom{n}{\alpha n} = \Theta^*(2^{H(\alpha)n}); \text{ or, equivalently,  }
    \binom{2^n}{\alpha 2^n} = 2^{H(\alpha)2^n} \cdot 2^{\Theta(n)}.
\end{equation}

Given a subset $D$ of an Abelian group $G$, we write $\Gamma_G(D)$ to denote the \emph{Cayley sum graph} of $G$ with respect to the generator set $D$; that is, the graph on the vertex set $G$ that contains the edge $(x, y)$ if and only if $x + y \in D$. (Since the group we consider is $\F_2^n$, for us this is the same as the regular Cayley graph of $G$ with respect to generator set $D$.) When $D=\{x\}$ is a singleton for some $x \in G$, we abuse notation slightly and write $\Gamma_G(x)$ for $\Gamma_G(\{x\})$.

% \begin{definition}
%     For a set~$D\subseteq \F_2^n$ denote by~$\Gamma\left(\F_2^n,D\right)$ the graph with vertex set~$\F_2^n$ that contains each edge~$(x,y)$ if and only if~$x+y\in D$.
%     This is usually denoted as the Cayley graph of the group~$\F_2^n$ with respect to the generator set~$D$.
% \end{definition}
%!TEX root = ../testing-sumsets.tex

\section{Optimally Testing Shifts}
\label{sec:shifts}

Given $A, B \subseteq \F_2^n$, we say that $B$ is a \emph{shift} of $A$ if there exists $z\in\F_2^n$ such that $A + z = B$. We obtain the following upper and lower bounds for the \emph{shift testing} problem:

\begin{theorem} \label{thm:shifts}
    Let $\calO_A, \calO_B : \F_2^n\to\zo$ be membership oracles for $A,B\sse\F_2^n$. Then:
    \begin{enumerate}
        \item The algorithm \Shift~(\Cref{alg:shift-testing}) makes $O(n2^{n/2}/\eps)$ oracle calls, runs in time $\poly(n) \cdot 2^n/\eps$, and guarantees that:
        \begin{enumerate}
    	\item If $B = A+z$ for some $z\in\F_2^n$, the algorithm outputs ``shift'' with probability $9/10$;
    	\item If for every $z\in\F_2^n$ we have $\dist(A + z, B) \geq \eps$, the algorithm outputs ``$\eps$-far from shift'' with probability $9/10$.
        \end{enumerate}
        \item Fix $c$ to be a constant that is less than $1/2$. Any (adaptive, randomized) algorithm with the performance guarantee in the previous item makes $\Omega(2^{n/2})$ oracle calls, even for $\eps = 1/2 - 1/2^{cn}.$ 
        % \tnote{ Silly note, but I'm confused by this wording: is this equivalent to saying ``for any constant $\epsilon < 1/2 - 1/2^{n/2}$''? {\bf Rocco:} To me saying ``for any constant $\epsilon < 1/2 - 1/2^{n/2}$'' means the same thing as ``for any constant $\eps < 1/2$'', i.e. 0.49 is okay but $1/2 - 1/n$ is not.  The current phrasing is meant to capture that $1/2 - 1/2^{0.49n}$ is okay but $1/2 - 1/2^{0.51n}$ is not. Does this make sense? }
    \end{enumerate}
\end{theorem}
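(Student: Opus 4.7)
The plan is a pairing trick that exploits the vector-space structure of $\F_2^n$. Fix a splitting $\F_2^n = V \oplus V^\perp$ with $\dim V = \lceil n/2\rceil$ and $V \cap V^\perp = \{0\}$ (for instance, take $V$ to be the coordinate subspace spanned by $e_1,\ldots,e_{\lceil n/2\rceil}$, so that $V^\perp$ is spanned by the remaining coordinates). The algorithm runs $T = O(n/\eps)$ rounds; in round $t$ it picks $\br_t \in \F_2^n$ uniformly and queries $\calO_A$ at every point of $\br_t + V$ and $\calO_B$ at every point of $\br_t + V^\perp$, for a total of $T(|V| + |V^\perp|) = O(n\, 2^{n/2}/\eps)$ oracle calls. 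Because every $z \in \F_2^n$ has a unique decomposition $z = v + w$ with $v\in V$ and $w\in V^\perp$, each round yields exactly one ``test pair'' $(\calO_A(\br_t + v), \calO_B(\br_t + w))$ per candidate shift $z$. The algorithm rejects $z$ if some round produces a mismatch on its pair and outputs ``shift'' iff some $z$ survives. Completeness is immediate: if $B = A + z^*$ with $z^* = v^* + w^*$, then $\calO_B(r + w^*) = \calO_A(r + w^* + z^*) = \calO_A(r + v^*)$, so $z^*$ is never rejected. For soundness, fix any $z = v+w$; the test pair for $z$ in one round is distributed as $(\calO_A(\bx), \calO_B(\bx + z))$ for uniform $\bx$, so it reveals a mismatch with probability $\dist(A+z, B) \geq \eps$. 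Over $T$ independent rounds, $z$ survives with probability at most $(1-\eps)^T \leq 2^{-10n}$, and a union bound over the $2^n$ candidate shifts finishes the argument.

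\textbf{Plan for Part (2) (Lower Bound).}
I would prove the lower bound by a deferred-decisions coupling argument in the spirit of \cite{GoldreichRon02}, bounding the total variation distance between an algorithm's query-response transcripts in scenarios (i) and (ii) by $O(q^2/2^n)$, where $q$ is the total query budget. Introduce an ``ideal'' process in which a uniform $\bz \in \F_2^n$ is drawn once and every query response (to either oracle) is returned as a fresh independent uniform bit; this trivially matches case (ii). To couple case (i) to the ideal process, sample $\calO_A$ and $\bz$ uniformly and try to reinterpret each ideal response bit as either a fresh $\calO_A$-value (for an $\calO_A$-query) or as the corresponding $\calO_A(\cdot + \bz)$ (for an $\calO_B$-query). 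This stays consistent as long as no two queries in the transcript probe the same underlying point of $\calO_A$; the failure event $E$ is that the $t$-th query is $\calO_A(x_t)$ with $x_t = y + \bz$ for some earlier $\calO_B$-query at $y$, or the symmetric event with the oracles swapped. Since in the ideal process $\bz$ is independent of the entire sequence of response bits, the algorithm's adaptive choice of the next query point is also independent of $\bz$, so the per-step collision probability is at most $(t-1)/2^n$; summing gives $\Pr[E] \leq q^2/2^{n+1}$, which is $o(1)$ for $q \leq 2^{n/2}/10$. It remains to verify that case (ii) is really a ``no''-instance: for each fixed $z$, $\dist(A+z, B)$ is the fractional Hamming distance of two independent uniform subsets, so by Chernoff it deviates from $1/2$ by more than $2^{-cn}$ with probability at most $2\exp(-2^{(1-2c)n+1})$, and a union bound over all $2^n$ shifts makes all shifts simultaneously $(1/2 - 2^{-cn})$-far with probability $1 - o(1)$ whenever $c < 1/2$.

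\textbf{Main obstacle.}
I expect the main subtlety to lie in making the deferred-decisions argument rigorous for \emph{adaptive} algorithms -- specifically, in justifying that the ideal process genuinely decouples $\bz$ from the transcript, so that the algorithm's adaptive queries cannot concentrate on high-probability values of $\bz$. The crucial structural point is that the ideal response bits are generated independently of $\bz$, which forces the algorithm's query points to be independent of $\bz$ and hence bounds the per-step collision probability by $1/2^n$ per prior opposite-oracle query; the coupling then breaks only on $E$. The upper bound presents no comparable obstacle once the non-degenerate splitting $V \oplus V^\perp = \F_2^n$ is in hand.
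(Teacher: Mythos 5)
Your proposal is correct and takes essentially the same approach as the paper on both parts: the upper bound is the same coset-covering algorithm (the paper's $D_1,D_2$ are exactly your complementary coordinate subspaces $V, V^\perp$, and the analysis is the same one-sample-per-shift-per-round argument with a union bound over $2^n$ shifts), and the lower bound is the same deferred-decisions coupling between $\Dyes$ and $\Dno$, with the failure event being a collision $q_t = q_{t'} + \bz$ across oracles. Your phrasing of the collision bound (observing that $\bz$ is independent of the query sequence in the ideal process and directly bounding $\Pr[E]\le\binom{q}{2}/2^n$) is a marginally cleaner way to package what the paper does by conditioning on no failure in each round and tracking the posterior of $\bs$, but the argument is the same.
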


In fact, the lower bound holds even for distinguishing the following two cases:  (i) $A$ is a uniform random subset of $\F_2^n$ and $B=A+s$ for a uniform random $s \in \F_2^n$; versus (ii) $A$ and $B$ are independent uniform random subsets of $\F_2^n$.

\subsection{Upper Bound}
\label{subsec:shift-ub}

In this section, we prove Item 1 of \Cref{thm:shifts}. Note that since
\[\dist(B,A+z) = \Prx_{\bx}\left[\calO_A(\bx) \neq \calO_B(\bx+z)\right],\]
if $B$ is a shift of $A$ (i.e. $B = A + z_\ast$ for some $z_\ast$), we then have for that $z_\ast$ that
\[\Prx_{\bx}\left[\calO_B(\bx) = \calO_A(\bx+z_\ast)\right]=1.\]
On the other hand, if $\dist(B,A+z) \ge \eps$ for every $z$, then for every $z$ we have  
\[\Prx_{\bx}\left[\calO_B(\bx) = \calO_A(\bx+z)\right]\le 1-\eps.\]
These simple observations suggest that in order to estimate $\Pr[\calO_B(\bx) = \calO_A(\bx + z)]$ for a particular $z$, we would like to make queries $\calO_B(\bx),\calO_A(\bx+z)$ for uniform random $\bx.$  The fact that we need to do this for all $z$ motivates the following approach; before proceeding, we introduce some notation.

\begin{notation}
	We define the subsets $D_1,D_2 \subset \F_2^n$,  where $D_1$ is the set of all $2^{\lfloor n/2 \rfloor}$ vectors whose last $\lfloor n/2 \rfloor$ coordinates are all-0 and $D_2 \subset \F_2^n$ is the set of $2^{\lceil n/2 \rceil }$ vectors whose first $
 \lceil n/2 \rceil$ coordinates are all-0.
Note that every $z \in \F_2^n$ has a unique expression as 
\[
z:=\z{1} + \z{2},\qquad\text{for}~ \z{1} \in D_1 ~\text{and}~ \z{2} \in D_2.
\]
%(This is the essential property we require in what follows; indeed, we could have chosen $D_1,D_2$ to be any two disjoint $2^{n/2}$-size subsets for which this holds.)
% TR: I think the above is implicit; fine to uncomment
\end{notation}

Fix a particular string $z=\z{1} + \z{2}$ as above.  
%If $\br$ is uniform over $\F_2^n$ then so is $\br + \z{1}$; 
We write $\bx=\br+\z{1}$, and we observe that if $\br$ is uniform random then so is $\bx$. 
As alluded to earlier we would like to query $B$ on $\bx$ and $A$ on $\bx + z = \br + \z{1} + z =\br + \z{2}$.  
The main observation is that if we query $B$ on all strings in $D_1 + \br$ and query $A$ on all strings in $D_2 + \br$, then no matter what $z$ is we will have made the queries $\calO_B(\bx)=\calO_B(\br + \z{1})$ and $\calO_A(\bx + z) = \calO_A(\br + \z{2})$, so we will have obtained a sample towards estimating $\Pr_{\bx}[\calO_B(\bx) = \calO_A(\bx + z)]$.
Since this is true for every $z$, we can reuse the above queries towards all possibilities for $z$.  
(Of course one sample is not enough to estimate a probability, so we will repeat the above with $n/\eps$ different choices of $\br$.)

\begin{algorithm}
    \vspace{1em}
    \textbf{Input:} Oracles $\calO_A, \calO_B :\F_2^n \to \zo$ and $\epsilon > 0$ \\[0.25em]
    \textbf{Output:} ``Shift'' or ``$\eps$-far from shift''

    \vspace{1em}

    \textsc{\Shift$(\calO_A, \calO_B, \epsilon)$:}
    \begin{enumerate}
        \item Repeat the following $n/\epsilon$ times:
        \begin{enumerate}
        		\item Draw a uniformly random $\br\in\F_2^n$.
        		\item Query $\calO_A$ on all $x \in D_1 + \br$, and query $\calO_B$ on all $y \in D_2 + \br$. 
        	\end{enumerate}
        	\item For each $z = z^{(1)} + z^{(2)}\in \F_2^n$, let $p_z$ be the fraction of the $n/\eps$ repetitions for which  
        	\[\calO_A(\br + z^{(1)}) = \calO_B(\br + z^{(2)}).\]
        	\item If $p_z = 1$ for some $z\in\F_2^n$, output ``shift''; otherwise output ``$\eps$-far from shift''.
    \end{enumerate}
    \caption{An algorithm for shift testing.}
    \label{alg:shift-testing}
\end{algorithm}

\begin{proofof}{Item 1 of \Cref{thm:shifts}}
Our algorithm, $\Shift$, is presented in \Cref{alg:shift-testing}.
	Note that if $B$ is a shift of $A$, i.e. if there exists a $z_{\ast}\in\F_2^n$ for which $B = A + z_{\ast}$ then 
	\[\br + \z{1}_{\ast} \in A \qquad\text{if and only if}\qquad \br + \z{1}_{\ast} + z_{\ast} = \br + \z{2}_\ast \in B,\]
	where we used the fact that $\z{1}_{\ast} + z_{\ast} = \z{1}_{\ast} + \z{1}_{\ast} + \z{2}_{\ast} = \z{2}_{\ast}$. In particular, we will have $p_{z_\ast} = 1$ and so the algorithm will return ``shift'' with probability $1$. On the other hand, suppose $B$ is $\eps$-far from $A+z$ for every $z\in \F_2^n$; fix any such $z$. Then the probability that all $n/\eps$ repetitions in \Cref{alg:shift-testing} will have $\calO_A(\br + \z{1}) = \calO_B(\br + \z{2})$ is at most 
	\[(1-\eps)^{n/\eps} \leq e^{-n}.\]
	Taking a union bound over all $z\in\F_2^n$ implies that the probability that \Cref{alg:shift-testing} will output ``$\eps$-far from shift'' is at least $1 - (2/e)^n$, completing the proof.
\end{proofof}

Note that \Cref{alg:shift-testing} in fact has a stronger guarantee than what is required by \Cref{thm:shifts}: The algorithm never outputs ``$\eps$-far from shift'' if $B$ is a shift of $A$, and if $B$ is $\eps$-far from every shift of $A$ then the algorithm outputs ``shift'' with probability at most $(2/e)^n$.

\subsection{Lower Bound}
\label{subsec:shift-lb}

To prove Item 2 of \Cref{thm:shifts} we define two probability distributions, $\Dyes$ and $\Dno$, over instances of the shift testing problem.

\begin{definition} \label{def:shift-dyes}
    A draw $(\bA,\bB)$ from $\Dyes$ is obtained 
    as follows:
    \begin{itemize}
        \item $\bA \subseteq \F_2^n$ includes each element of $\F_2^n$ independently with probability 1/2.
        \item $\bB \sse \F_2^n$ equals $\bA + \bs$ for $\bs$ sampled uniformly at random from $\F_2^n$.
    \end{itemize}
\end{definition}

Note that for $(\bA,\bB)\sim \Dyes$, $\bB$ is a shift of $\bA$.   

\begin{definition} \label{def:shift-dno}
A draw $(\bA,\bB)$ from $\Dno$ is obtained as follows: 
\begin{itemize}
\item $\bA \subseteq \F_2^n$ includes each element of $\F_2^n$ independently with probability 1/2. 

\item $\bB \subseteq \F_2^n$ also includes each element of $\F_2^n$ with probability 1/2 (independently of $\bA$).

\end{itemize}
\end{definition} 

A straightforward application of the Chernoff bound, combined with a union bound over the $2^n$ possible shifts, shows that with probability at least $19/20$ a draw of $(\bA,\bB) \sim \Dno$ is such that $\bB$ is $(1/2 - 1/2^{cn})$-far from every shift of $\bA$ (for any constant $c<1/2$). So to prove Item 2 of  \Cref{thm:shift-testing}, it is enough to establish the following claim for deterministic algorithms. (By Yao's minimax principle, this is sufficient to prove a lower bound for randomized algorithms as well.) 

\begin{claim} \label{claim:indistinguishable}
Let ${\tt Test}$ be any deterministic,
%\xnote{I think it suffices to prove it for deterministic algorithms.}, 
adaptive algorithm that makes $N:=0.1 \cdot 2^{n/2}$ oracle calls to ${\cal O}_A$ and ${\cal O}_B$. Let $T_{\tt test}(A,B)$ be the ``transcript'' of its queries to the oracles and received responses, i.e. $T_{\tt test}(A,B)$ consists of

\begin{center}
(first query to one of the oracles, response received)

$\vdots$

($N$-th query to one of the oracles, response received).
\end{center}
Then we have 
\[\dtv\left(T_{\tt test}(\bA_\yes,\bB_\yes),T_{\tt test}(\bA_\no,\bB_\no)\right) \leq 0.02,\]
where $(\bA_\yes,\bB_\yes) \sim \Dyes$ and $(\bA_\no,\bB_\no) \sim \Dno$.
\end{claim}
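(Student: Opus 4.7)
The plan is a standard deferred-decisions coupling argument. Sample once and for all a sequence of i.i.d.\ uniform bits $r_1, r_2, \ldots$ together with an independent uniform shift $\bs\in\F_2^n$. I will simulate both distributions using this shared randomness so that the resulting transcripts agree unless a low-probability ``collision'' event involving $\bs$ occurs. Without loss of generality assume the algorithm never repeats a query. The $\Dno$-simulation simply feeds $r_t$ as the response to the $t$-th query. The $\Dyes$-simulation maintains a partial table $\tilde{\bA}:\F_2^n\to\{0,1,\ast\}$ initialized to $\ast$ everywhere: a query $\calO_A(x)$ is answered by $\tilde{\bA}(x)$ if already defined and otherwise by the next fresh bit $r_t$ (which is then recorded in $\tilde{\bA}(x)$), and a query $\calO_B(y)$ is answered analogously via $\tilde{\bA}(y+\bs)$. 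A standard verification shows that this lazy process produces a transcript distributed exactly as $T_{\tt test}(\bA_\yes,\bB_\yes)$.

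Let $q_1,\dots,q_N$ denote the queries issued during the $\Dno$-simulation; these are deterministic functions of $r_1,\dots,r_{N-1}$ alone, hence independent of $\bs$. Define the collision event
\[E \;:=\; \bigl\{\exists\, i \neq j :\ q_i,\,q_j\ \text{are sent to different oracles and}\ \bs = q_i + q_j\bigr\}.\]
I claim that on $E^c$ the $\Dyes$-simulation issues exactly the same queries and receives exactly the same responses as the $\Dno$-simulation, so that $T_{\tt test}(\bA_\yes,\bB_\yes) = T_{\tt test}(\bA_\no,\bB_\no)$ under the coupling. This follows by induction on $t$: assuming the two simulations agree through step $t$, the only way the $\Dyes$-response at step $t+1$ could differ from $r_{t+1}$ is if the relevant entry of $\tilde{\bA}$ is already defined; since queries are not repeated, this would force the relevant entry to have been set by a previous query to the \emph{opposite} oracle, giving $\bs = q_{t+1}+q_s$ for some $s\le t$ and contradicting $E^c$. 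Consequently $\dtv\bigl(T_{\tt test}(\bA_\yes,\bB_\yes),\,T_{\tt test}(\bA_\no,\bB_\no)\bigr)\le\Pr[E]$.

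To bound $\Pr[E]$, note that since $q_1,\dots,q_N$ are independent of $\bs$, each event $\{\bs = q_i+q_j\}$ has probability exactly $2^{-n}$. If $N_A$ and $N_B$ denote the numbers of queries to $\calO_A$ and $\calO_B$ respectively, then the number of opposite-type pairs is $N_A N_B\le N^2/4$ by AM--GM, so a union bound yields
\[\Pr[E]\;\le\;\frac{N^2}{4\cdot 2^n}\;=\;\frac{(0.1\cdot 2^{n/2})^2}{4\cdot 2^n}\;=\;0.0025\;<\;0.02,\]
completing the argument. The main obstacle I anticipate is the bookkeeping in the coupling: verifying that the deferred-decisions sampler genuinely has marginal distribution $\Dyes$, and checking that the collision event defined from the \emph{$\Dno$}-queries really controls when the two adaptive simulations can diverge (since after the first collision the $\Dyes$ query sequence may depart from the $\Dno$ one, but the argument above only requires control of the divergence event itself). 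Once the coupling is set up cleanly, the probability estimate is a one-line union bound over cross-pairs.
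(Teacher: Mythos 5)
Your proof is correct and takes essentially the same deferred-decisions coupling approach as the paper, with the same key observation that the transcripts can only diverge when a cross-oracle pair of queries sums to the hidden shift $\bs$. The only cosmetic differences are that the paper strengthens the adversary by returning both $\calO_A(q)$ and $\calO_B(q)$ per round and bounds the failure probability round by round via conditioning, whereas you stay in the one-query-per-round model, exploit the independence of the $\Dno$-query sequence from $\bs$ to bound $\Pr[E]$ in one shot, and thereby obtain the slightly tighter constant $0.0025$ in place of $0.02$.
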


The claim follows by analyzing the behavior of the algorithm on an oracle constructed over the course of answering the queries posed by algorithm {\tt Test}, i.e., ``deferring'' the decision of whether the oracle $(\bA,\bB)$ is drawn from $\Dyes$ or $\Dno$. See for example Section~7.1 of \cite{GoldreichRon02}.  
% \rnote{cf.~p.~26-29 of \href{https://www.wisdom.weizmann.ac.il/~oded/R2/bdg.pdf}{https://www.wisdom.weizmann.ac.il/~oded/R2/bdg.pdf} for a (very) detailed example of this kind of argument. Maybe we want to be a little more formal/rigorous than what I wrote below, but I think we may not want to reach the level of detail that is that paper.}

\def\Test{{\tt Test}}

\begin{proof}
For simplicity, we assume that in each round, $\Test$ queries one point $q$ and receives \emph{both} $\calO_A(q)$ and $\calO_B(q)$; this can only make $\Test$ more powerful.
%\xnote{Test could potentially query the same point twice, once for $A$ and once for $B$. For convenience let's give it both $A(q)$ and $B(q)$ when $q$ is queried.}

Consider the following approach to answering queries posed by {\tt Test}:  before any queries are made, draw a uniform random $\bs \sim \F_2^n$.  Let $q_1,\dots,q_{t-1} \in \F_2^n$ be the first $t-1$ queries made by {\tt Test} (we may suppose without loss of generality that all these $t-1$ query strings are different from each other, since any algorithm that repeats a query string can easily be modified so as not to do so).  When the $t$-th query string $q_t$ is provided by {\tt Test}, the answer is generated as follows:
\begin{flushleft}
\begin{enumerate}
\item If $\bs \neq q_t + q_{t'}$ for all $t' \leq t$, then two independent uniform random bits $\bb_{\bA},\bb_{\bB} \in \{0,1\}$ are drawn and returned as $\calO_{\bA}(q_t)$  and $\calO_{\bB}(q_t)$. (It may be helpful to think of this outcome as being ``recorded'', i.e.~ when this happens the process ``decides'' that $\bb_{\bA},\bb_{\bB}$ are the values of $\bA$ and $\bB$ on the point $q_t$.)

\item If $\bs = q_t + q_{t'}$ for some $t' \leq t$, then the process halts and outputs ``failure.''
\end{enumerate}
\end{flushleft}

The key observation is that conditioned on the above process proceeding through $t$ queries without an output of ``failure'', the length-$t$ transcript is distributed exactly according to the pair of oracles $(\bA,\bB)$ being $(\bA_\yes,\bB_\yes) \sim \Dyes$, and also exactly according to the pair of oracles being $(\bA_\no,\bB_\no) \sim \Dno$.
This is because in either case, as long as no pair of queries $q_t,q_{t'}$ sum to the ``hidden'' random string $\bs \in \F_2^n$, every response to every oracle call is distributed as an independent uniform random bit.
%\xnote{I think the argument before feels that it is for algorithms that are nonadaptive so I give more details below.}   

We finish the proof by showing that
%, for any $N$-query deterministic, adaptive algorithm $\Test$, 
the probability that the process above outputs ``failure'' is at most $0.02$.
To this end, we note that conditioning on no ``failure'' during the first $t-1$ rounds $q_1,\ldots,q_{t-1}$, $\bs$ is distributed uniformly among all points in $\mathbb{F}_2^n$ that are not equal to $q_i+q_j$ for some $i,j\in [t-1]$.
%Since $N=0.1 \cdot  2^{n/2}$, the number of pairs $\{t,t'\}$ with $t,t' \leq N$ is less than $0.05 \cdot 2^n$, and hence the probability that $\bs = q_t + q_{t'}$ for any such pair is less than $1/20$.
%So the variation distance $\dtv(T_{\tt test}(\bA_\yes,\bB_\yes),T_{\tt test}(\bA_\no,\bB_\no))$ is less than $1/20$, and the claim is proved. 
The number of such points is at least $2^n-N^2/2>0.99\cdot 2^n$. On the other hand, the process outputs ``failure'' in round $t$ if one of 
$q_1+q_t,\ldots,q_{t-1}+q_t$ is $\bs$, which happens with probability at most $N/(0.99\cdot 2^n)<0.2\cdot 2^{-n/2}$. It follows from a union bound on the $N$ rounds that the process outputs ``failure'' with probability at most $0.02$. This finishes the proof of the claim.
\end{proof}

%!TEX root = ../testing-sumsets.tex

\section{Lower Bound for Testing Sumsets}
\label{sec:sumset-lb}

\newcommand{\sham}{\mathrm{sham}}

In this section, we show that the lower bound for shift testing established in \Cref{subsec:shift-lb} implies a lower bound for the problem of testing sumsets. More formally, we prove the following:

\begin{theorem} \label{thm:sumset-lb}
    Let $\calO_S: \F_2^n \to \zo$ be a membership oracle for %an unknown and arbitrary 
    $S\sse\F_2^n$. 
    There is an absolute constant $\eps > 0.0125$ such that the following holds: Let $\calA$ be any (adaptive, randomized) algorithm with the following performance guarantee:
    \begin{enumerate}
        \item If $S = A+A$ for some $A\subseteq \F_2^n$, $\calA$ outputs ``sumset'' with probability $9/10$; and
        \item If $\dist(S, A+A) \geq \epsilon$ for all $A\subseteq \F_2^n$, $\calA$ outputs ``$\eps$-far from sumset'' with probability $9/10$.
    \end{enumerate}
    Then $\calA$ must make $\Omega(2^{n/2})$ calls to $\calO_S$.
\end{theorem}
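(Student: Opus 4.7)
The plan is to reduce from the shift-testing lower bound of \Cref{thm:shifts}. Decomposing $\F_2^{n+2} = \F_2^2 \times \F_2^n$ and setting $C_{ab} := \{ab\} \times \F_2^n$ for the four $n$-dimensional cosets, I embed a shift-testing instance $(A, B) \subseteq \F_2^n$ into the sumset-testing instance
\[
S \;:=\; C_{00} \,\cup\, (\{01\}\times A) \,\cup\, (\{10\}\times B) \,\cup\, C_{11}.
\]
Each query to $\calO_S$ is simulated by at most one query to $\calO_A$ or $\calO_B$: queries inside $C_{00}$ or $C_{11}$ answer $1$; queries inside $C_{01}$ or $C_{10}$ are forwarded to $\calO_A$ or $\calO_B$. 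So an $o(2^{n/2})$-query sumset tester yields an $o(2^{n/2})$-query distinguisher between the pushforwards of $\Dyes$ and $\Dno$, contradicting \Cref{thm:shifts}, provided the $\Dyes$-pushforward is supported on sumsets w.h.p.\ and the $\Dno$-pushforward is supported on sets $\eps$-far from every sumset w.h.p.

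The ``yes'' case is straightforward: for $(A, B) \sim \Dyes$ with $B = A + s$, take
\[
T \;:=\; (\{00\}\times\{0\}) \,\cup\, (\{01\}\times A) \,\cup\, (\{10\}\times B).
\]
A coset-by-coset computation gives $T+T = S$ whenever $A+A$, $B+B$, and $A+B$ all equal $\F_2^n$, and each of these holds with probability $1 - 2^{-\Omega(2^n)}$ for random $A$: for any fixed $y \in \F_2^n$, pairing $x \leftrightarrow x+y$ shows that $y \notin A+A$ with probability at most $(3/4)^{2^{n-1}}$, and similarly for $B+B$ and $A+B = (A+A)+s$.

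The heart of the proof is the ``no'' case: show that when $A, B$ are \emph{independent} uniform random subsets of $\F_2^n$, the set $S$ is $\eps$-far from every sumset w.h.p. Writing any candidate sumset as $T+T$ with $T \cap C_{ab} = \{ab\} \times V_{ab}$, the coset-wise expansion of $T+T$ gives $(T+T) \cap C_{00} = \{00\} \times \bigl(\bigcup_{ab} V_{ab} + V_{ab}\bigr)$ and $(T+T) \cap C_{11} = \{11\} \times \bigl((V_{00}+V_{11}) \cup (V_{01}+V_{10})\bigr)$, with analogous off-diagonal formulas giving $U_{01}, U_{10}$. If $T+T$ is $\eps$-close to $S$, then the diagonal parts must cover nearly all of $\F_2^n$, while $U_{01}, U_{10}$ must closely approximate the independent random sets $A, B$. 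For any fixed $(U_{01}, U_{10})$, a standard volume/Chernoff bound yields
\[
\Prx_{A,B}\!\left[\,|A\triangle U_{01}| + |B\triangle U_{10}| \leq 4\eps \cdot 2^n\,\right] \;\leq\; 2^{(2H(4\eps)-2)\cdot 2^n},
\]
which is exponentially small in $2^n$ as long as $\eps < 1/8$.

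The main obstacle --- and where the linear-algebraic argument enters --- is the union bound: I need to count the number of \emph{distinct} pairs $(U_{01}, U_{10})$ achievable by generators $T$ whose $T+T$ nearly covers $C_{00} \cup C_{11}$, and show that it is at most $2^{o(2^n)}$. The plan is to argue that the simultaneous conditions $\bigcup_{ab} V_{ab} + V_{ab} \approx \F_2^n$ and $(V_{00}+V_{11}) \cup (V_{01}+V_{10}) \approx \F_2^n$ rigidify the generator --- for instance, each sufficiently large $V_{ab}$ must lie close to a common moderate-codimension coset structure --- so that only $2^{o(2^n)}$ pairs $(U_{01}, U_{10})$ can be induced by such $T$. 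Combining this structural count with the per-$T$ exponential bound via a union bound then gives $\eps$-farness with high probability for a concrete threshold $\eps > 0.0125$, completing the reduction.
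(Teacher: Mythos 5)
Your embedding is different from the paper's in one crucial detail that turns out to be fatal: you include all of $C_{11} = \{11\}\times\F_2^n$ in $S$. The paper's construction $\calS(A,B)$ leaves $C_{11}$ \emph{empty} (adding only a single point $(1,1,s)$ in the yes case), and this is precisely what makes the no-instances far from sumsets. With your construction, the no-instances are in fact \emph{sumsets} with overwhelming probability: take
\[
T := \{(0,0,0^n)\} \,\cup\, (\{01\}\times A) \,\cup\, (\{10\}\times B).
\]
Computing coset by coset, $T+T$ equals $\{00\}\times(\{0\}\cup (A{+}A)\cup (B{+}B))$ on $C_{00}$, exactly $\{01\}\times A$ on $C_{01}$, exactly $\{10\}\times B$ on $C_{10}$, and $\{11\}\times(A{+}B)$ on $C_{11}$. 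For independent uniform random $A,B$ of density $1/2$, both $A{+}A\cup B{+}B\cup\{0\}$ and $A{+}B$ equal $\F_2^n$ with probability $1-2^{-\Omega(2^n)}$ (the same pairing argument you use in the yes case), so $T+T = S$ with probability $1-2^{-\Omega(2^n)}$. Your $\Dno$-pushforward is therefore supported almost entirely on sumsets, and the reduction collapses: there is nothing for the tester to distinguish.

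This also explains why the ``rigidity'' counting step you flag as the main obstacle cannot be completed in your setup --- it is not just hard, it is false, since the pair $(U_{01},U_{10}) = (A,B)$ itself is realized by the $T$ above. The paper instead keeps $C_{11}$ empty in the no-instance, so that $\Vol_{11}(\bS_{\no})=0$ and $\Vol_{00}(\bS_{\no})=1$. Any sumset $\gamma$-close to $\bS_{\no}$ must then be ``$\eps$-eligible'' ($\Vol_{00}\geq 1-\eps$, $\Vol_{11}\leq\eps$), and this one-sided constraint on $\Vol_{11}$ is exactly what forces the generator $A$ to be sparse on whole cosets (if both $\Vol_{00}(A),\Vol_{11}(A)>0$ then both are $\leq\eps$, and similarly for $\Vol_{01},\Vol_{10}$), which caps the number of eligible sumsets at roughly $2^{(1+2H(\eps))2^n}$ and lets the union bound against the $2^{2\cdot 2^n}$ random no-instances go through. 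With $\Vol_{11}(S)$ near $1$ rather than near $0$, no such sparsity is forced and the count blows up. To salvage your approach you would need to change the embedding so that the fourth coset is empty (or near-empty) on the no-side --- at which point you recover the paper's construction.
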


% \xnote{I think the proof now is not really a reduction so I revised the sentence.}
The distributions we use to prove  \Cref{thm:sumset-lb} are based on
  the distributions $\Dyes$ and $\Dno$
  defined in \Cref{def:shift-dyes,def:shift-dno} for shift testing.
%via a randomized reduction from shift testing. 
Given $A, B \sse \F_2^n$, we define  $\calS(A,B)\sse\F_2^{n+2}$
%\tnote{Since $\calS$ is just a random set, should we change it the notation to $\bS$? {\bf Rocco:} I think here we're defining ${\calS}(A,B)$ for any sets $A,B$ (which need not be random) so there is no randomness, right - the current notation seems okay to me } 
as 
\begin{equation} 
    \label{eq:reduction-definition}
    \calS(A,B) := \cbra{x : x_1 = x_2 = 0} \sqcup \cbra{ (1, 0, a) : a \in A} \sqcup \cbra{(0,1, b) : b \in B},
\end{equation}
where the notation $(b_1, b_2, v)$ indicates that the bits $b_1$ and $b_2$ are concatenated with $v \in \F_2^n$ to create an element in $\F_2^{n+2}$.
\Cref{fig:shift-to-sumsets} illustrates the set $\calS(A, B)$.
%Consider the distributions $\Dyes$ and $\Dno$ defined in \Cref{def:shift-dyes,def:shift-dno} and 

\def\bSyes{\bS_{\yes}}
\def\bSno{\bS_{\no}}

We use $\Dno$ to 
define $\calS_{\no}$, a distribution over subsets of $\mathbb{F}_2^{n+2}$ as follows:
To draw $\bSno\sim \calS_{\no}$, we draw  
  $(\bA_{\no}, \bB_{\no}) \sim \Dno$ and set $\bSno=\calS(\bA_{\no},\bB_{\no})$. 
%For convenience, we add one ``extra'' point to $\calS_{\yes}$, defining this set as 
On the other hand, we use $\Dyes$ to define $\calS_{\yes}$ as follows:
To draw $\bSyes\sim \calS_{\yes}$, we   draw $(\bA_{\yes}, \bB_{\yes} = \bA_{\yes} + \bs) \sim \Dyes$ but add one ``extra'' point to $\bSyes$, defining it as:
 $
    \bSyes = \calS(\bA_{\yes}, \bB_{\yes}) \sqcup \{(1, 1, \bs)\}.
$
This will ensure that $\bSyes\sim \calS_{\yes}$ is likely to be a sumset (see \Cref{prop:sumset-yes} below).

\begin{figure}[t]
    \centering

    \begin{tikzpicture}
        % \node (blah) at (-4, 0) {$A, B \sse \F_2^n$};
        % \draw[|->] (blah) -- (-1, 0);
        \draw[-] (0, 1.5) -- (0, -1.5);
        \draw[-] (1.5, 0) -- (-1.5, 0);
        \node (B) at (-0.75, 0.75) {$B$};
        \node (A) at (0.75, -0.75) {$A$};
        \node (0) at (0.75, 0.75) {$0$};
        \node (1) at (-0.75, -0.75) {$1$};
        \node () at (-1.75, -1.75) {$(0,0)$};
        \node () at (1.75, -1.75) {$(1,0)$};
        \node () at (-1.75, 1.75) {$(0,1)$};
        \node () at (1.75, 1.75) {$(1,1)$};
        \fill[pattern=north west lines,pattern color=black!50] (-1.5, -1.5) -- (-1.5, 0) -- (0,0) -- (0, -1.5);
        \draw[-] (-1.5, -1.5) -- (-1.5, 1.5) -- (1.5, 1.5) -- (1.5, -1.5) -- (-1.5, -1.5);
    \end{tikzpicture}
    
    \caption{The set $\calS(A, B) \sse \F_2^{n+2}$. By \Cref{prop:sumset-yes}, for a typical $(\bA_{\yes},\bB_{\yes})$ drawn from $\Dyes$, adding a single point $(1,1,\bs)$ in the top right cell makes $S(\bA_{\yes},\bB_{\yes})$ into a sumset.}
    
    \label{fig:shift-to-sumsets}
\end{figure}

At a high level, the proof of \Cref{thm:sumset-lb} contains three steps: we show (1) that $\bSyes\sim \calS_{\yes}$ is a sumset with high probability (\Cref{prop:sumset-yes}), (2) that $\bSno\sim \calS_{\no}$ is $\epsilon$-far from being a sumset with high probability (\Cref{prop:sumset-no}), and (3) that oracles to $\bSyes\sim \calS_{\yes}$ and $\bSno\sim \calS_{\no}$ are too similar for an algorithm that makes few queries to tell the difference, where `similarity' is measured in terms of the total variation distance between distributions over transcripts (proof of \Cref{thm:sumset-lb}). The theorem then follows quickly from these three facts.

\begin{proposition} \label{prop:sumset-yes}
    With probability at least $1 - 2^{-\Omega(2^n)}$, $\bSyes\sim \calS_{\yes}$ is a sumset over $\F_2^{n+2}.$
\end{proposition}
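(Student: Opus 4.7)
The plan is to exhibit an explicit set $T \subseteq \F_2^{n+2}$ for which $T+T = \bSyes$, with the structure of $\bSyes$ (shown in \Cref{fig:shift-to-sumsets}) essentially dictating the choice. The main constraint is that the $(1,1)$-coset of $\bSyes$ contains only the single point $(1,1,\bs)$; so any two elements of $T$ whose first two coordinates sum to $(1,1)$ must themselves sum to $(1,1,\bs)$. A natural way to satisfy this is to take $T$ with no element in the $(1,1)$-coset, exactly one element in the $(1,0)$-coset, and exactly one in the $(0,1)$-coset, those two being paired so that their sum is $(1,1,\bs)$. Concretely, I would set
\[T \ := \ \bigl\{(0,0,a) : a \in \bA_\yes\bigr\} \ \cup \ \bigl\{(1,0,0),\ (0,1,\bs)\bigr\}.\]

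A direct computation of the six blocks making up $T+T$ (three self-sums and three cross-sums of the parts of $T$) then yields
\[T+T \ = \ \bigl(0,0,\, (\bA_\yes+\bA_\yes)\cup\{0\}\bigr) \ \cup \ (1,0,\bA_\yes) \ \cup \ (0,1,\bA_\yes+\bs) \ \cup \ \bigl\{(1,1,\bs)\bigr\}.\]
Since $\bB_\yes = \bA_\yes+\bs$, the last three blocks already agree with $\bSyes$ on the $(1,0)$-, $(0,1)$-, and $(1,1)$-cosets exactly. Hence the proposition reduces to showing that $(\bA_\yes + \bA_\yes)\cup\{0\} = \F_2^n$ with probability $1 - 2^{-\Omega(2^n)}$. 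Because $0 \in \bA_\yes + \bA_\yes$ whenever $\bA_\yes \neq \emptyset$ (take $a+a=0$), and $\bA_\yes = \emptyset$ occurs with probability only $2^{-2^n}$, this is equivalent to showing $\bA_\yes + \bA_\yes = \F_2^n$ with the stated probability.

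The remaining estimate is a routine union bound. For any fixed $v \in \F_2^n\setminus\{0\}$, partition $\F_2^n$ into its $2^{n-1}$ disjoint pairs $\{x, x+v\}$; then $v \in \bA_\yes + \bA_\yes$ iff some such pair lies entirely in $\bA_\yes$. The pairs being disjoint, these events are independent with probability $1/4$ each, giving $\Pr[v \notin \bA_\yes+\bA_\yes] = (3/4)^{2^{n-1}}$. A union bound over the $2^n - 1$ nonzero $v$ then yields $\Pr[\bA_\yes+\bA_\yes \neq \F_2^n] \leq 2^n \cdot (3/4)^{2^{n-1}} = 2^{-\Omega(2^n)}$, as required. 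The only genuinely creative step in the argument is guessing the right $T$; once that is identified, both the algebraic verification and the probabilistic tail estimate are mechanical, and I do not anticipate any real obstacle.
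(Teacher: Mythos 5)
Your proposal is correct and essentially identical to the paper's proof: the paper takes the witness $C = \{0^{n+2}\} \cup \{(1,0,a): a \in \bA_\yes\} \cup \{(1,1,\bs)\}$, which is exactly your $T$ shifted by $(1,0,0^n)$ (hence has the same sumset), and then proves $\bA_\yes + \bA_\yes = \F_2^n$ w.h.p.\ via the same disjoint-pairs argument giving $(3/4)^{2^{n-1}}$ per nonzero element and a union bound.
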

\begin{proof}
Let $(A_{\yes},B_{\yes})$ be a pair of sets in the support of $\Dyes$ with $B_{\yes}=A_{\yes}+s$.
It is easy to verify that 
$\calS(A_{\yes}, B_{\yes}) \sqcup \{(1, 1, s)\}$ is equal to 
$C+C$ with
%    Consider the random set $\bC\sse\F_2^{n+2}$, defined as
    \[C:= \left\{0^n \right\} \sqcup \left\{(1,0,a) : a\in A_{\yes}\right\} \sqcup \left\{(1,1,s)\right\}.\]
as long as $A_{\yes}+A_{\yes}$ covers all of $\F_2^n$.
So it suffices to show that this holds with extremely high probability with a uniformly random set $\bA_{\yes}$. 

To see this, consider any fixed, nonzero element $z \in \F_2^n$. Without loss of generality, suppose that the first coordinate of $z$ is $1$.  We have
    \begin{align*}
        \Pr\left[z \notin \bA_{\yes}+\bA_{\yes}\right] &=
        \Pr\big[\text{for all~}y \in \F_2^n, \text{~either~}y \notin\bA_{\yes}\text{~or~}z+y \notin \bA_{\yes}\big]
    = (3/4)^{2^{n-1}},
    \end{align*}
    where the second equality holds because $\bA_{\yes}$ is a uniform random subset of $\F_2^n$ and $y,z+y$ are distinct elements (observe that the first coordinate of $y$ is 0 while the first coordinate of $z+y$ is 1).
    Since $\Pr[0^n\notin\bA_{\yes}+\bA_{\yes}]=\Pr[\bA_{\yes}$ is empty$]=(1/2)^{2^n}<(3/4)^{2^{n-1}}$, we get that each fixed element $z \in \F_2^n$ is missing from $\bA_{\yes}+\bA_{\yes}$ with probability at most $(3/4)^{2^{n-1}}$.  
    The claim follows from a union bound over the $2^n$ elements of $\mathbb{F}_2^n$.
    %Hence the expected size of $\F_2^n \setminus (\bA_{\yes}+\bA_{\yes})$ is at most $2^n \cdot (3/4)^{2^{n-1}}$, and the probability that $\F_2^n \setminus (\bA_{\yes}+\bA_{\yes})$ is non-empty is at most $2^n \cdot (3/4)^{2^{n-1}}$ by Markov's inequality. 
    %Thus $\calS(\bA_{\yes},\bB_{\yes})$ equals the sumset $\bC + \bC$ with probability $1 - 2^{-\Omega(2^n)}$.
\end{proof}

\begin{proposition} \label{prop:sumset-no}
    With probability at least $1 - o_n(1)$, $\bSno\sim \calS_{\no}$ is $0.0125$-far from every sumset.
\end{proposition}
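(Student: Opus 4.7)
I will show that with probability $1 - o_n(1)$ over $(\bA, \bB) \sim \Dno$, no sumset $T = C + C \subseteq \F_2^{n+2}$ lies within normalized Hamming distance $0.0125$ of $\calS(\bA, \bB)$. The argument combines a structural reduction (using the projection onto the first two coordinates) with a union bound over candidate generators $C$, and exploits the independence of $\bA$ and $\bB$ to supply the probability savings. A preliminary Chernoff bound lets me condition on $|\bA|, |\bB| = 2^{n-1}(1 \pm o(1))$, failing with probability $2^{-\Omega(2^n)}$. For $C \subseteq \F_2^{n+2}$ I will write $C_{ij} := \{y \in \F_2^n : (i, j, y) \in C\}$, so that the sumset $T$ decomposes coset-wise: $T_{11} = (C_{00}+C_{11}) \cup (C_{01}+C_{10})$, and analogously for the other three cosets.

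\textbf{Coarse structure via projection.} Let $\pi : \F_2^{n+2} \to \F_2^2$ project onto the first two coordinates. Since $\pi(C + C) = \pi(C) + \pi(C)$, the image $\pi(T)$ is itself a sumset of $\F_2^2$. Enumerating all sumsets of $\F_2^2$ (namely $\emptyset$, $\{0\}$, the three two-element subgroups, and $\F_2^2$), whenever $\pi(T) \neq \F_2^2$ the set $T$ is entirely disjoint from at least one of the ``random'' cosets of $\calS(\bA, \bB)$, giving $\dist(T, \calS(\bA, \bB)) \geq \min(|\bA|, |\bB|)/2^{n+2} \geq 1/8 - o(1)$, well above $0.0125$. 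I may thus restrict to $\pi(T) = \F_2^2$, which forces $|\pi(C)| \geq 3$: at most one of the four $C_{ij}$ is empty. Moreover, closeness at scale $0.0125$ implies that each coset-wise discrepancy is at most $0.05 \cdot 2^n$; in particular $|T_{11}| \leq 0.05 \cdot 2^n$.

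\textbf{Sub-case analysis and the main obstacle.} If all four $C_{ij}$ are non-empty, the inequality $|P+Q| \geq \max(|P|, |Q|)$ applied to both constituents of $T_{11}$ uniformly bounds $|C_{ij}| \leq 0.05 \cdot 2^n$. The number of such generators is at most $\binom{2^n}{\leq 0.05 \cdot 2^n}^4 \leq 2^{4 H(0.05) \cdot 2^n}$, and for each fixed $C$ the events $|T_{01} \triangle \bB| \leq 0.05 \cdot 2^n$ and $|T_{10} \triangle \bA| \leq 0.05 \cdot 2^n$ are independent (as $\bA \perp \bB$) with probability at most $2^{-(1 - H(0.05)) \cdot 2^n}$ each, by the standard Stirling--Chernoff binomial tail bound. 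A union bound thus contributes at most $2^{(6 H(0.05) - 2) \cdot 2^n} = o(1)$, since $H(0.05) \approx 0.286 < 1/3$. The harder sub-case is when exactly one of the $C_{ij}$ is empty; by symmetry I focus on $C_{10} = \emptyset$, so that $T_{11} = C_{00} + C_{11}$ only forces $|C_{00}|, |C_{11}| \leq 0.05 \cdot 2^n$, while $|C_{01}|$ may range up to $\sim 2^{n-1}$, naively contributing a prohibitive $2^{2^n}$ factor. I expect this to be the main difficulty of the proof: the resolution must exploit the rigid joint structure that $T_{01} = C_{00}+C_{01}$ and $T_{10} = C_{01}+C_{11}$ are both unions of few translates of the common set $C_{01}$, together with the sumset identity $T_{01}+T_{10} = T_{11} + (C_{01}+C_{01})$ and the observation that $\bA + \bB$ covers essentially all of $\F_2^n$ w.h.p.---forcing $C_{01}+C_{01}$ to be nearly full and thereby constraining the admissible $C_{01}$ sharply enough to close the union bound at the target constant.
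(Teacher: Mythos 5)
Your reduction to $\pi(T) = \F_2^2$ and the case analysis on which $C_{ij}$ are empty is sound, and mirrors the case analysis the paper performs on the additive root $A$ in the proof of \Cref{prop:eligible-sumset-count}. Your treatment of the ``all four nonempty'' case is correct: the $\binom{2^n}{\leq 0.05\cdot 2^n}^4$ count against the independent $2^{-2(1-H(0.05))2^n}$ probability gives $2^{(6H(0.05)-2)2^n}=o(1)$. And you correctly identify the one-empty-coset sub-case (say $C_{10}=\emptyset$, with $C_{01}$ unconstrained) as the obstacle: there the union bound count is $2^{(1+2H(0.05))2^n}$ and with your per-coset probability bound the total is $2^{(4H(0.05)-1)2^n}$, which blows up since $H(0.05)>1/4$. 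As stated, this is a genuine unresolved gap, and you say as much.

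However, the resolution you sketch---exploiting the ``rigid joint structure'' of $T_{01}$ and $T_{10}$ as translates of a common $C_{01}$ to constrain $C_{01}$---is not what is needed and is likely a dead end. The missing ingredient is much more elementary: you are over-estimating the probability by requiring \emph{each} of $|T_{01}\triangle\bB|$ and $|T_{10}\triangle\bA|$ to be at most $0.05\cdot 2^n$, when the true constraint is that their \emph{sum} (plus the deterministic contributions from cosets $(0,0)$ and $(1,1)$) is at most $0.05\cdot 2^n$. Counting the symmetric difference jointly over the $2^{n+1}$ points in cosets $(0,1)\cup(1,0)$ gives probability at most $\binom{2^{n+1}}{\leq 0.05\cdot 2^n}/2^{2^{n+1}} = 2^{(2H(0.025)-2)2^n+O(n)}$, and then the union bound becomes $2^{(2H(0.05)+2H(0.025)-1)2^n}\approx 2^{-0.089\cdot 2^n}=o(1)$. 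The paper organizes exactly this savings more cleanly: rather than per-sumset probability bounds, it (i) defines ``$\eps$-eligible'' sumsets with $\Vol_{00}\geq 1-\eps$ and $\Vol_{11}\leq\eps$, (ii) bounds their number by $2^{(1+2H(\eps))2^n+O(n)}$ via your same case analysis, (iii) multiplies by the number $\binom{2^{n+2}}{\gamma 2^{n+2}}$ of subsets of $\F_2^{n+2}$ within distance $\gamma$ of any one sumset, and (iv) divides by $|\mathrm{supp}(\calS_{\no})|=2^{2^{n+1}}$ to get $2^{(2H(\eps)+4H(\gamma)-1)2^n+O(n)}=o(1)$ at $\eps=0.05$, $\gamma=\eps/4$. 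The paper also separately observes that $\bSno$ is automatically $\eps/4$-far from every non-eligible sumset (since $\Vol_{00}(\bSno)=1$ and $\Vol_{11}(\bSno)=0$ always), which corresponds to your $\pi(T)\neq\F_2^2$ reduction but also covers eligible-violating $T$ with $\pi(T)=\F_2^2$. So the fix to your proof is a tighter binomial tail estimate, not a new structural lemma about $C_{01}$.
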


We now complete the proof of \Cref{thm:sumset-lb} using \Cref{prop:sumset-yes,prop:sumset-no}. The proof of \Cref{prop:sumset-no} is deferred to \Cref{subsec:sumset-lb-no}.

\begin{proofof}{\Cref{thm:sumset-lb}}
Let $\calA$ be an algorithm for sumset testing on $\mathbb{F}_2^{n+2}$ that makes at most $N = 0.1 \cdot 2^{n/2}$ queries. As in the proof of \Cref{thm:shifts}, we let $T_{\calA}(S)$ denote the $N$-element transcript of $\calA$ given the oracle $\calO_S$ and take a ``deferred decision'' approach to prove that $\calA$ cannot distinguish between $\calS_{\yes}$ and $\calS_{\no}$ with high probability.

By \Cref{prop:sumset-yes,prop:sumset-no}, the probability that $\bSyes\sim \calS_{\yes}$ is 
%1/2^{n+2}$-far from 
not a sumset is $o_n(1)$, and the probability that $\bS\sim \calS_{\no}$ is $0.0125$-close to any sumset is $o_n(1)$. As a result, to prove \Cref{thm:sumset-lb} it suffices to 
show that 
\[
    \dtv\left(T_{\calA}\big(\bSyes), T_{\calA}(\bSno)\right) < 0.1 - o_n(1),
\]
where $\bSyes\sim \calS_{\yes}$ and $\bSno\sim \calS_{\no}$.

Consider the sham oracle $\calO_{\sham}$ that samples a point $\bs \in \mathbb{F}_2^n$ uniformly at random, then responds to queries as follows:
\begin{enumerate}
    \item If the query is a point $q_t$ for which $q_{t,1} = q_{t,2} = 0$, the oracle returns 1.
    \item If the query is $(1, 1, \bs)$, the oracle outputs ``failure''. Otherwise, if $q_{t,1} = q_{t,2} = 1$, it returns 0.
    \item If the query is a point $q_t$ such that $q_t + q_{t'} = (1,1,\bs)$ for some previously queried point $q_{t'}$, the oracle outputs ``failure''. Otherwise, it returns a random bit.
\end{enumerate}

We proceed to consider the behavior of $\calA$ given $\calO_{\sham}$, $\calO_{\bS_{\yes}}$, and $\calO_{\bS_{\no}}$. Conditioned on the event that $\calO_{sham}$ does not output ``failure'', $\calA$ always receives the answer `1' when querying a point with initial coordinates $(0, 0)$, always receives the answer `0' when querying a point with initial coordinates $(1, 1)$, and receives a random bit when querying a point with the initial coordinates $(0, 1)$ or $(1, 0)$. 
If, after the point of ``failure'', our oracle subsequently responds to queries consistently with the distribution $\calS(\bA_{\yes}, \bA_{\yes} + \bs)$, randomly determining membership in $\bA_{\yes}$ via deferred decision as necessary, the resulting distribution over transcripts is identical to that given oracle access to $\bS_{\yes}$. Likewise, if the oracle responds `0' on $(1, 1, s)$ and continues to return random bits on queries whose initial coordinates begin with $(0,1)$ or $(1, 0)$, the resulting distribution over transcripts is identical to that given oracle access to $\bS_{\no}$. We conclude that the distribution of $T_{\calA}(\sham)$, the transcript of $\calA$ given $\calO_{\sham}$, is identical to the distribution of transcripts given $\calO_{\bS_{\yes}}$ and $\calO_{\bS_{\no}}$ unless failure occurs.

Failure is unlikely for any algorithm $\calA$ that makes at most $N$ queries: With $N$ queries, the algorithm can rule out at most $N = O(2^{n/2})$ candidates for $\bs$ by querying points with the initial coordinates $(1, 1)$, and at most $N^2 = 0.01 \cdot 2^{n}$ candidates for $\bs$ by querying points with the initial coordinates $(0,1)$ and $(1, 0)$. Conditioned on no failure, the posterior distribution of $\bs$ is thus uniform over at least $(0.99 - o_n(1)) 2^{n}$ points. Thus subsequently querying a point discovers $s$ with probability at most 
\[\frac{N}{(0.99 - o_n(1)) 2^{n}} \leq \frac{0.2}{2^{n/2}}.\] 
Union-bounding over all $N$ rounds gives a failure probability of at most $0.02$.

We conclude that
\begin{align*}
    \dtv(T_{\calA}(\bS_{\yes}), T_{\calA}(\bS_{\no})) \leq 0.02 + o_n(1),
\end{align*}
and thus any algorithm that makes at most $N = 0.1 \cdot 2^{n/2}$ queries cannot answer correctly with probability 9/10.
\end{proofof}

\subsection{Proof of \Cref{prop:sumset-no}}
\label{subsec:sumset-lb-no}

We prove \Cref{prop:sumset-no} via a counting argument. The distribution $\calS_{\no}$ produces subsets of $\mathbb{F}_2^{n+2}$ of a specific form: these subsets contain every point in the subspace $\{x: x_1 = x_2 = 0\}$, no points in the coset $\{x: x_1 = x_2 = 1\}$, and have density roughly 0.5 on the cosets $\{x: x_1 = 0, x_2 = 1\}$~and $\{x: x_1 = 1, x_2 = 0\}$. We first bound the number of sumsets that are \emph{$\epsilon$-eligible} (roughly, ``close'') to any subset of this form (\Cref{prop:eligible-sumset-count}). Since there are relatively few subsets of $\F_2^{n+2}$ near any $\epsilon$-eligible sumset, we conclude that most subsets drawn from $\calS_{\no}$ are far from any sumset (\Cref{prop:sumset-no}).

\begin{remark}
It can be shown that the number of sumsets in $\F_2^{n+2}$ is at most $2^{2^{n+1} + O(n^2)}$ (this bound is implicit in the work \cite{Sargsyan15}, and for completeness we give a proof in \Cref{appendix:count}). However, this upper bound is not enough for us per se since the support of $\calS_{\no}$ is also of size $2^{2^{n+1}}$; hence we need to use the more refined notion of ``$\eps$-eligible'' sumsets mentioned above.
\end{remark}

In the remainder of this section, we make frequent reference to the volume of sets within the subspace $\{x: x_1 = x_2 = 0\}$ of $\mathbb{F}_2^{n+2}$ and its three cosets. Given a set $S\sse\F_2^{n+2}$ and a pair of bits $(b_1, b_2) \in \zo^2$, we define
\[
    \Vol_{b_1b_2}(S) := \frac{|S\cap \{x \in\F_2^{n+2}: x_1 = b_1, x_2 = b_2\}|}{2^n}.
\]
in order to simplify notation.

\begin{definition} \label{def:eligible-sumset}
Given $\eps > 0$, we say that a set $S \sse \F_2^{n+2}$ is an \emph{$\eps$-eligible sumset} if $S = A+A$ for some $A\sse\F_2^{n+2}$ and if the following holds:
\[\Vol_{00}(S) \geq 1-\eps \qquad\text{and}\qquad \Vol_{11}(S) \leq \eps.\]
\end{definition}

Roughly, the $\epsilon$-eligible sumsets are all those that might be close to $\bSno\sim \calS_{\no}$. 

\begin{proposition} \label{prop:eligible-sumset-count}
    For any $\eps$, the number of $\eps$-eligible sumsets in $\F_2^{n+2}$ is at most 
    \[
        \max\left\{ 2^{4 H(\epsilon) \cdot 2^n}, 2^{ (1 + 2H(\epsilon)) 2^n} \right\} \cdot 2^{O(n)}.
    \]
\end{proposition}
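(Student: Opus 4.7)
The plan is to upper-bound the number of $\eps$-eligible sumsets by counting subsets $A \sse \F_2^{n+2}$ whose sumset $S = A+A$ is $\eps$-eligible; since every such $S$ has at least one preimage $A$, this suffices. First, I would partition $\F_2^{n+2}$ into the four cosets $V_{b_1 b_2} := \{x \in \F_2^{n+2} : x_1 = b_1, x_2 = b_2\}$ of the subgroup $V_{00}$, and write $A_{b_1 b_2} := A \cap V_{b_1 b_2}$. The identity $V_{b_1 b_2} + V_{c_1 c_2} = V_{b_1 + c_1, b_2 + c_2}$ immediately gives the decomposition
\[
S \cap V_{11} = (A_{00} + A_{11}) \cup (A_{01} + A_{10}),
\]
so the condition $\Vol_{11}(S) \leq \eps$ forces both $|A_{00} + A_{11}| \leq \eps 2^n$ and $|A_{01} + A_{10}| \leq \eps 2^n$. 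Combined with the trivial inequality $|B + C| \geq \max(|B|,|C|)$ for nonempty $B, C$ (since $|B + c| = |B|$ for any $c \in C$), this yields the key structural observation: if $A_{00}$ and $A_{11}$ are both nonempty then both have size at most $\eps 2^n$, and likewise for the pair $\{A_{01}, A_{10}\}$.

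I would then split into three cases according to how many of the four pieces $A_{b_1 b_2}$ are nonempty. \textbf{Case I} (all four nonempty): all four have size $\leq \eps 2^n$, so by \eqref{eq:stirling} the count is at most $\binom{2^n}{\leq \eps 2^n}^{4} \leq 2^{4 H(\eps) 2^n + O(n)}$. \textbf{Case II} (exactly one empty, say WLOG $A_{11} = \emptyset$): then $A_{01}, A_{10}$ are nonempty and hence small, while $A_{00}$ is unconstrained, giving a count of at most $2^{2^n} \cdot \binom{2^n}{\leq \eps 2^n}^{2} \leq 2^{(1 + 2 H(\eps)) 2^n + O(n)}$; the three symmetric sub-cases contribute the same bound. \textbf{Case III} (two or more empty): then $A$ is supported on at most two of the four cosets, and after translation by a fixed element $A$ lies in an index-$\geq 2$ subgroup of $\F_2^{n+2}$. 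For instance, when $A \sse V_{01} \cup V_{10}$, which is a coset rather than a subgroup, translating by any $v \in V_{01}$ moves $A$ into the subgroup $V_{00} \cup V_{11}$, and $A + A = (A+v)+(A+v)$ is unchanged. Hence $S = A+A$ is a sumset inside a copy of $\F_2^{n+1}$, and the remark preceding the proposition (proved in \Cref{appendix:count}) bounds the number of such $S$ by $2^{2^n + O(n^2)}$.

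Summing over the three cases and observing that Case III's bound is dominated by Case II's for any fixed $\eps > 0$ (since $2 H(\eps) 2^n \gg n^2$), the total is at most $\max\{2^{4 H(\eps) 2^n}, 2^{(1 + 2 H(\eps)) 2^n}\} \cdot 2^{O(n)}$, as claimed. The main obstacle I expect is handling Case III carefully: one must verify, across every sub-case of two-coset support, that $S = A + A$ is realizable as a sumset inside an $(n+1)$-dimensional subgroup (with the translation trick needed precisely in the two sub-cases where the two cosets supporting $A$ together form a coset rather than a subgroup), so that the general sumset count of \Cref{appendix:count} can be invoked. The rest is routine counting via Stirling's formula.
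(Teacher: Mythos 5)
Your proof is correct, and your Cases~I and II essentially match the paper's Cases~1 and~2; the genuine difference lies in Case~III. You partition by the exact number of empty pieces $A_{b_1 b_2}$, whereas the paper partitions by which ``antipodal'' pairs $\{A_{00},A_{11}\}$, $\{A_{01},A_{10}\}$ have both members nonempty --- for example, the sub-case $A_{00}=A_{11}=\emptyset$ with $A_{01},A_{10}\neq\emptyset$ falls in your Case~III but in the paper's Case~2. More substantively, your Case~III invokes the global sumset count (\Cref{prop:sumset-count}) inside the $(n+1)$-dimensional subgroup that contains $A$ after a suitable translation, a nice use of the fact that $A+A$ is invariant under translating $A$; this gives $2^{2^n+O(n^2)}$ possibilities. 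The paper's Case~3 instead bounds the possibilities for $S$ directly from the eligibility constraints, obtaining the sharper $2^{(1+H(\eps))2^n}\cdot 2^{O(n)}$, and in fact deliberately avoids the global sumset count, which the remark before the proposition flags as too crude when applied over all of $\F_2^{n+2}$; your application of it over an $(n+1)$-dimensional subgroup is what makes it usable here. Your extra $2^{O(n^2)}$ overhead is swallowed by the $2^{2H(\eps)\cdot 2^n}$ term for any fixed $\eps>0$, which is the only regime the paper uses ($\eps = 0.05$ in the proof of \Cref{prop:sumset-no}), so the slightly weaker overhead is harmless. One small slip: three, not two, of the two-coset supports are cosets rather than subgroups ($V_{01}\cup V_{10}$, $V_{01}\cup V_{11}$, and $V_{10}\cup V_{11}$), though your translation trick handles all of them identically.
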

\begin{proof}
    Let $S$ be any $\epsilon$-eligible sumset, and let $A$ satisfying $A+A = S$ be an additive root of $S$. We bound the number of $\epsilon$-eligible sumsets by considering possibilities for $A$.

    We begin with the observation that if $\Vol_{00}(A), \Vol_{11}(A) > 0$, then it must be true that $\Vol_{00}(A), \Vol_{11}(A) \leq \epsilon$. Otherwise, we would have $\Vol_{11}(A+A) = \Vol_{11}(S) > \epsilon$, contradicting our assumption that $S$ is $\epsilon$-eligible. Likewise, we have that if $\Vol_{01}(A), \Vol_{10}(A) > 0$, then $\Vol_{01}(A), \Vol_{01}(A) \leq \epsilon$. We split into cases accordingly.

    \begin{enumerate}
        \item All four cosets of $\{x: x_1 = x_2 = 0\}$ are nonempty: $\Vol_{00}(A), \Vol_{11}(A), \Vol_{01}(A), \Vol_{10}(A) > 0$. 
        
        In this case, we have that $\Vol_{00}(A), \Vol_{11}(A)$, $\Vol_{01}(A), \Vol_{01}(A) \leq \epsilon$. Using \Cref{eq:stirling}, we can then bound the number of possibilities for $A$ (and $S$) by 
        \[
            {2^n\choose \eps 2^n}^4 \leq 2^{4 H(\eps) \cdot 2^n} \cdot 2^{O(n)}.
        \]
        \item Either $\Vol_{00}(A)$ and $\Vol_{11}(A) > 0$, or $\Vol_{01}(A)$ and $ \Vol_{10}(A) > 0$, but not both. 
        
        Here the volume of $A$ on two of the four cosets is at most $\epsilon$, in one other coset it is 0, and in the final coset it may be as large as 1. In this case, again using \Cref{eq:stirling}, the number of possibilities for $A$ (and $S$) is bounded by
        \[
            2^{2^{n}} \cdot {2^n\choose \eps 2^n}^2 \leq 2^{(1+ 2H(\eps)) \cdot 2^n} \cdot 2^{O(n)}.
        \]
        \item Either $\Vol_{00}(A)$ or $\Vol_{11}(A) = 0$, and either $\Vol_{01}(A)$ or $\Vol_{10}(A) = 0$, hence at least two of the four cosets contain no points in $A$.

        Assume without loss of generality that $\Vol_{11}(A) = \Vol_{01}(A) = 0$. This immediately implies that $\Vol_{01}(S) = \Vol_{11}(S) = 0$, as $A+A$ cannot contain points in either coset. (Note that, whichever pair of cosets we choose to zero out, this implies that $\Vol_{11}(S) = 0$ and either that $\Vol_{01}(S) = 0$ or $\Vol_{10}(S) = 0$.) Using the fact that $\Vol_{00}(S) \geq 1-\epsilon$, the number of possibilities for $S$ is bounded by 
        \[
            2^{2^{n}} \cdot {2^n\choose \eps 2^n} \leq 2^{(1+ H(\eps)) \cdot 2^n} \cdot 2^{O(n)}.
        \]
    \end{enumerate}
    Summing the number of $\epsilon$-eligible sumsets covered by each case completes the proof.
\end{proof}

% \begin{proof}
% We consider two separate cases:

% \begin{enumerate}
%     \item $\Vol_{00}(S) = \Vol_{01}(S) = 1$. In this case, as $\Vol_{11}(S) \leq \eps$, we have at most 
%     \[
%         2^{2^n} \times {2^n\choose \eps 2^n} \leq 2^{(1+ H(\eps))2^n}
%     \]
%     possibilities for $S$, as $S$ can be arbitrary on the $(1,0)$-coset and must have relative-density at most $\eps$ in the $(1,1)$-coset. The inequality follows from \Cref{eq:stirling}.
%     \item There exists $x \notin S$ with $x_1 = 0$.
% \end{enumerate}
% \noindent{Case 2: There exists $x \notin S$ with $x_1 = 0$.} In this case, note first that as $\Vol_{11}(S)\leq \eps$, we must have that if $\Vol_{10}(A), \Vol_{01}(A)\neq 0$, then $\Vol_{10}(A) \leq \eps.$ By similar reasoning, we must also have that if $\Vol_{00}(A), \Vol_{11}(A)\neq 0$, then $\Vol_{11}(A) \leq \eps.$ Consider the graph $\Gamma_{\F_2^n}(x)$ as defined in the proof of \Cref{prop:sumset-count}. As $A$ must be an independent set in $\Gamma_{\F_2^n}(x)$, and since $\Gamma_{\F_2^n}(x)$ is a matching, we the number of such sets is at most 
% \[3^{2^{n+1}} \times {2^n\choose \eps 2^n} \times {2^n\choose \eps 2^n} \approx 2^{1.58\times 2^{n+1} + H(\eps)2^{n+1}}.\]
% On the other hand, if one of $\Vol_{10}(A), \Vol_{01}(A)$ is $0$\ldots \red{TODO: Rearrange the casework so that it makes more sense.}   
%\end{proof}

We conclude with the proof of \Cref{prop:sumset-no}.

\begin{proofof}{\Cref{prop:sumset-no}}
    By \Cref{prop:eligible-sumset-count}, the number of $\epsilon$-eligible sumsets is $2^{ (1 + 2H(\epsilon)) 2^n} \cdot 2^{O(n)}$ when $\epsilon < 0.1$. By \Cref{eq:stirling}, the number of \emph{subsets} of $\F_2^{n+2}$ that are $\gamma$-close to a given sumset is 
    \[
        \binom{2^{n+2}}{\gamma 2^{n+2}} = 2^{H(\gamma)2^{n+2}} \cdot 2^{O(n)}.  
    \]
    Thus, by union-bounding over all $\epsilon$-eligible sumsets, we conclude that the number of subsets of $\F_2^{n+2}$ that are $\gamma$-close to any $\epsilon$-eligible sumset is at most
    \[
        2^{ (1 + 2H(\epsilon)) 2^n + H(\gamma) 2^{n + 2}} \cdot 2^{O(n)}. 
    \]

    Choosing $\epsilon = 0.05$ and $\gamma = \epsilon / 4$ gives an upper bound of $2^{1.96 \cdot 2^n}$ subsets of $\F_2^{n+2}$ that are $\epsilon/4$-close to any $\epsilon$-eligible sumset. Since $\calS_{\no}$ is distributed uniformly over $2^{2^{n+1}}$ subsets, the probability that $\bSno\sim \calS_{\no}$ is $(\epsilon/4)$-close to any $\epsilon$-eligible sumset is $2^{-\Omega(2^n)}$.

    We further claim that $\bSno\sim \calS_{\no}$ is always $(\eps/4)$-far from any sumset that is not $\eps$-eligible.~This is just because that we always have 
    %$\$
    %is at least $(\epsilon / 4)$-far from $\calS_{\no}$ with probability $1 - o_n(1)$. To see this, observe that 
    $\Vol_{11}(\bSno) = 0$ and $\Vol_{00}(\bSno) = 1$.
    % \xnote{I think this is always the case, instead of probability $1-o_n(1)$.} % with probability $1-o_n(1)$. 
    On the other hand, any non-$\epsilon$-eligible sumset $S$ has either $\Vol_{00}(S) < 1-\epsilon$ or $\Vol_{11}(S) > \epsilon$ by definition and thus, must be at least $(\epsilon / 4)$-far from $\bSno$. 
    %with probability $1-o_n(1)$.

    Thus with probability at least $1 - o_n(1)$, $\bSno\sim\calS_{\no}$ is $\epsilon/4 = 0.0125$-far from any sumset.
\end{proofof}
%!TEX root = ../testing-sumsets.tex

\section{Refuting Sumsets in the Smoothed Analysis Setting}
\label{sec:sumset-certifications}

In this section we study the smallest size of a \emph{certificate} that a set is \emph{not} a sumset.
Informally, for a set~$S\sse\F_2^n$ a sumset~$0$-certificate is a set~$D\subseteq \F_2^n$ of points such that querying the oracle~$\calO_S$ on every point in~$D$ suffices to prove that~$S$ is not a sumset. The formal definition follows:

\begin{definition}\label{def:0cert}
    A set~$D\subseteq \F_2^n$ is a \emph{sumset~$0$-certificate for}~$S\sse\F_2^n$ if there is no {sumset}~$S'=A+A \subseteq \F_2^n$ for which~$S \cap D = S' \cap D.$
%    \tnote{ Do we prefer $\calO_S \Big\vert_{D}$ to $\calO_S(D)$? The latter seems more natural to me, but I don't know if there's an existing standard. We could also avoid the issue entirely by using $S \triangle D$. {\bf Rocco:} Yeah, maybe it's more natural to phrase the formal definition in terms of $S \cap D$, I wrote it that way up above.}
\end{definition}

Small~$0$-certificates are important objects of study for many property testing problems; for example, consider the classic problem of linearity testing. Since a function $f: \F_2^n \to \F_2$ is linear if and only if $f(x+y)=f(x) + f(y)$ for all $x,y \in \F_2^n$, the property of linearity is characterized by the non-existence of a ``linearity 0-certificate'' of size three.  As is well known, in the seminal work \cite{BLR93} Blum et al.~showed that this is a \emph{robust} characterization, in the sense that a simple sampling procedure which queries random triples $x,y,x+y$ and checks whether they constitute a linearity 0-certificate suffices to distinguish linear functions from functions which are far from being linear.  A similar framework of sampling 0-certificates is at the heart of many other important property testing results such as low degree testing (see e.g.~\cite{AKKLR,jutpatrudzuc04} and many other works) and testing triangle-freeness (see e.g.~\cite{AFKS00,Alon02} and many other works). Of course, testing results of this sort rely on, and motivate the discovery of, structural results showing that functions which are far from having the property in question must have ``many'' ``small'' 0-certificates.

With this motivation, it is natural to study the size of sumset 0-certificates.  Our sumset testing lower bound from \Cref{sec:sumset-lb} suggests that there are sets which are far from being sumsets but which do not have ``many'' ``small'' sumset 0-certificates. In fact, known results imply that for every non-sumset the smallest 0-certificate is of size $\Omega(2^{n/2}/\sqrt{n})$:
% \rnote{We could state, here or elsewhere (intro) that this immediately implies that any one-sided tester for sumsets must make $\Omega(2^{n/2}/\sqrt{n})$ queries. But maybe that undercuts our two-sided tester lower bound. Thoughts?}

\begin{lemma}
Let $S\sse\F_2^n$ be any non-sumset.  Then any sumset $0$-certificate~$D$ for $S$ must have $|D|\geq \Omega(2^{n/2}/\sqrt{n})$.
\end{lemma}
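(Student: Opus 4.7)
The plan is to invoke a theorem of Alon~\cite{Alon07}, which (as mentioned in the technical overview) states that there exists an absolute constant $c>0$ such that every subset $T\subseteq \F_2^n$ with $|T| \geq 2^n - c\cdot 2^{n/2}/\sqrt{n}$ is a sumset. Given this, the lemma reduces to a simple ``completion'' argument: any candidate small certificate $D$ can be extended to an instance witnessing that $S$ agrees with a genuine sumset on $D$.

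Concretely, I would argue the contrapositive. Suppose $D\subseteq \F_2^n$ has $|D| < c \cdot 2^{n/2}/\sqrt{n}$. Define
\[
S' := (S \cap D) \,\cup\, (\F_2^n \setminus D).
\]
By construction, $S' \cap D = S \cap D$, so $S$ and $S'$ give identical answers to any query in $D$. Moreover,
\[
|S'| \;=\; |S\cap D| + (2^n - |D|) \;\geq\; 2^n - |D| \;>\; 2^n - c \cdot 2^{n/2}/\sqrt{n}.
\]
By Alon's theorem $S'$ is a sumset, so $D$ fails to be a sumset $0$-certificate for $S$ (in the sense of \Cref{def:0cert}). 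Contrapositively, any sumset $0$-certificate $D$ for any $S$ (in particular, any non-sumset $S$) must satisfy $|D| \geq c \cdot 2^{n/2}/\sqrt{n} = \Omega(2^{n/2}/\sqrt{n})$.

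The argument itself is very short once Alon's theorem is in hand, so the main ``obstacle'' is really just locating and citing the right statement from \cite{Alon07}; no intricate combinatorial work is needed on our end. I would be careful to state Alon's bound in precisely the form ``every set of size at least $2^n - c\cdot 2^{n/2}/\sqrt{n}$ is a sumset'' so that the density threshold lines up cleanly with the size $|\F_2^n \setminus D|$ produced by the completion step.
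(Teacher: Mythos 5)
Your proof is correct and is essentially identical to the paper's: both invoke Alon's theorem that every subset of $\F_2^n$ of size at least $2^n - \Theta(2^{n/2}/\sqrt{n})$ is a sumset, and both complete $S\cap D$ to a sumset by labeling every point of $\F_2^n\setminus D$ as a member. No meaningful difference in route or technique.
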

\begin{proof}
This is an immediate corollary of a result due to Alon (Section~4 of \cite{Alon07}), which shows that any subset $T \subseteq \F_2^n$ of size $|T| \geq 2^n - {\frac 1 {4000}} {\frac {2^{n/2}}{\sqrt{n}}}$ is a sumset. It follows that if $|D| < {\frac 1 {4000}} {\frac {2^{n/2}}{\sqrt{n}}}$, then any 0/1 labeling of the points in $D$ is consistent with a sumset (by labeling all points in $\F_2^n \setminus D$ as belonging to the set).
\end{proof}

The previous lemma, which establishes than any $0$-certificate for sumset testing must have size $\Omega(2^{n/2}/\sqrt{n})$, establishes Part (2) of \Cref{thm:smoothed-sumset-refutation}. In the remainder of this section, we prove a matching upper bound (up to a factor of $\poly(n, 1/\epsilon)$) for any set perturbed by a small amount of random noise, thereby establishing Part (1) of \Cref{thm:smoothed-sumset-refutation}:

\begin{theorem}\label{thm:0cert_noised}
    For any set~$S\sse \F_2^n$ and any~$\varepsilon\in (0,\frac{1}{2}]$, there exists a sumset~$0$-certificate for~$\bN_\eps(S) = S\triangle \bR_\varepsilon$ of size~$2^{n/2} \cdot O(n^{1.5}/\epsilon^{1.5})$, with probability $1 - o_n(1)$ over the random draw of $\bR_\eps$. 
    Moreover, such a $0$-certificate can be found efficiently and non-adaptively (by querying $2^{n/2} \cdot O(n^{1.5}/\epsilon^{1.5})$ points) given oracle access to $\bN_\eps(S).$
\end{theorem}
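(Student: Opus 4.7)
The plan is to take the certificate $\bB = V \cup \bU$, where $V$ is an arbitrary fixed subspace of $\F_2^n$ of dimension $n/2$ (so $|V|=2^{n/2}$) and $\bU$ is a uniformly chosen random subset of $\F_2^n\setminus V$ of size $m = 2^{n/2}\cdot O(n^{1.5}/\eps^{1.5})$. The argument is a two-step sieve: the queries on $V$ restrict the collection of ``candidate'' sumsets
\[
\calF := \{T : T \text{ is a sumset in } \F_2^n \text{ with } T \cap V = \bN_\eps(S) \cap V\}
\]
consistent with the observed labeling on $V$, and the random queries on $\bU$ then eliminate every $T \in \calF$ simultaneously.

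For the random-sample step, I would bound the failure probability directly. For each fixed sumset $T$, $\Pr[T \in \calF] = \eps^{a_T}(1-\eps)^{2^{n/2}-a_T}$ where $a_T := |V \cap (T \triangle S)|$. Conditional on $T \in \calF$, the symmetric difference $T \triangle \bN_\eps(S)$ lies entirely in $V^c$; and since conditioning on $\bR_\eps|_V$ leaves $\bR_\eps|_{V^c}$ as an independent family of $\mathrm{Ber}(\eps)$ coordinates, a Chernoff bound gives $|T \triangle \bN_\eps(S)|\geq \eps 2^n/2$ with probability $1-e^{-\Omega(\eps 2^n)}$. Conditional on this event, the random sample $\bU$ of size $m$ fails to catch a point of $T \triangle \bN_\eps(S)$ with probability at most $e^{-\Omega(\eps m)}$. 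Summing over all sumsets $T$,
\[
\Pr[\bB \text{ is not a $0$-certificate}] \leq \sum_T \Pr\!\big[T \in \calF \text{ and } \bU \cap (T \triangle \bN_\eps(S)) = \emptyset\big] \leq \mathbb{E}[|\calF|]\cdot e^{-\Omega(\eps m)} + o_n(1).
\]

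The main obstacle is thus establishing $\log \mathbb{E}[|\calF|] = O(2^{n/2}\cdot \poly(n,1/\eps))$, so that the choice of $m$ above drives the tail to $o_n(1)$. The trivial bound $|\calF| \leq 2^{2^{n-1}+O(n^2)}$ (from the total sumset count of \Cref{appendix:count}) is far too weak. I would exploit the coset decomposition of sumsets with respect to $V$: writing $\F_2^n = V \oplus W$ with $|W| = 2^{n/2}$, any sumset $T=A+A$ admits a decomposition $A = \bigsqcup_{w \in W} A_w$ with $A_w := A \cap (w+V)$, and satisfies $T \cap V = \bigcup_{w \in W} (A_w + A_w)$. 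Each summand $A_w + A_w$ must be a sumset in $V$ contained in the random trace $\bN_\eps(S) \cap V$, which by classical doubling estimates forces $|A_w|$ to be roughly at most $\sqrt{|\bN_\eps(S)\cap V|}$. Carefully counting tuples $(A_w)_{w \in W}$ subject to these constraints, together with exploiting the randomness of $\bR_\eps$ to ensure that the trace is ``generic'', should yield the required bound on $\mathbb{E}[|\calF|]$. Extracting the precise polynomial factor that matches the stated $2^{n/2}\cdot n^{1.5}/\eps^{1.5}$ query complexity is the most delicate part of the argument.
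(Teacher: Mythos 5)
Your high-level structure matches the paper exactly: fix a subspace $V$ (the paper optimizes its dimension to $d = n/2 + 1.5\log(n/\epsilon)$ rather than hard-coding $n/2$, which is where the extra $\poly(n,1/\eps)$ factor comes from), add a sprinkling of random points $\bU$, bound the number of sumsets consistent with the trace on $V$, and then argue that $\bU$ eliminates them all. Lemma~\ref{lem:zerocoset} in the paper is exactly your coset decomposition $(A+A)\cap V = \bigcup_w (A_w + A_w)$. The second ``sieve'' step via a union bound over candidate sumsets is also essentially Lemma~\ref{lem:S-eps-far-bound}. So far so good.

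The genuine gap is in the bound on the number of candidate roots, and it is a real one. You assert that ``$A_w + A_w \subseteq \bN_\eps(S)\cap V$ by classical doubling estimates forces $|A_w| \lesssim \sqrt{|\bN_\eps(S)\cap V|}$.'' This is false: no doubling inequality forces $|B+B|\gtrsim |B|^2$ for general $B$. If $A_w$ is a coset of a subspace then $|A_w+A_w|=|A_w|$, so the containment $A_w+A_w\subseteq T$ only gives the trivial $|A_w|\leq |T|$. Even granting the claim, $\sqrt{|\bN_\eps(S)\cap V|}\approx 2^{d/2}\approx 2^{n/4}$ is far too weak: with $2^{n-d}$ cosets and $|A_w|\lesssim 2^{n/4}$, you would get $\log|\calF|\gtrsim 2^{n/2}\cdot 2^{n/4}\cdot n$, which cannot be killed by $2^{n/2}\poly(n)$ random queries. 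The actual bound you need is $|A_w| = O(n^2/\eps^2)$, i.e.\ polynomial in $n$, not exponential. The paper gets this by observing that $A_w$ must be an independent set in the Cayley graph $\Gamma_{\F_2^n}(\bK)$ where $\bK := V\setminus \bN_\eps(S)$ (Lemma~\ref{lem:cayley}), that each vertex of $V$ falls into $\bK$ independently with probability $\geq\eps$ (Lemma~\ref{lemma:not-in-noisedS}), and that a Cayley sum graph with such a random generator set has independence number $O(n^2/\eps^2)$ with high probability (Lemma~\ref{lem:indep-number}, proved via a greedy extraction of a ``many-sums'' subset). Your phrase ``exploiting the randomness of $\bR_\eps$ to ensure the trace is generic'' is pointing in the right direction, but the specific mechanism — small independence number of the random Cayley graph, not a doubling estimate — is the key idea your sketch is missing, and without it the counting step does not close.
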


The proof of \Cref{thm:0cert_noised} is given at the end of this section, after several preliminary lemmas. We first observe that for any sumset $S = A+A$, the set $A$ is an independent set in the Cayley graph on $\F_2^n$ with respect to every generator $s \not\in S$:

\begin{lemma}\label{lem:cayley}
    Fix a sumset $A + A = S \subseteq \F_2^n$ and a set $D \subseteq \F_2^n$ with $S \cap D = \emptyset$. Then $A$ is an independent set in $\Gamma_{\F_2^n}(D)$. 
\end{lemma}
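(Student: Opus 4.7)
The proof is essentially an unpacking of definitions, so the plan is short. The goal is to show that no edge of $\Gamma_{\F_2^n}(D)$ has both endpoints in $A$.

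First I would recall that by the definition of the Cayley sum graph given in the preliminaries, $\Gamma_{\F_2^n}(D)$ has an edge between vertices $x$ and $y$ if and only if $x + y \in D$. Thus, to verify that $A$ is independent in $\Gamma_{\F_2^n}(D)$, it suffices to show that $a_1 + a_2 \notin D$ for every pair $a_1, a_2 \in A$.

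Next, fix any $a_1, a_2 \in A$. By the definition of the sumset, $a_1 + a_2 \in A + A = S$. Since we are given $S \cap D = \emptyset$, this forces $a_1 + a_2 \notin D$, so there is no edge between $a_1$ and $a_2$ in $\Gamma_{\F_2^n}(D)$. Taking $a_1 = a_2$ also handles the potential self-loop case: $a_1 + a_1 = 0 \in S$ (assuming $A$ is nonempty, which is the only case of interest), and hence $0 \notin D$.

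There is no real obstacle here — the statement is a direct reformulation of the hypothesis $S \cap D = \emptyset$ in graph-theoretic language, and the only mild subtlety is remembering to rule out self-loops, which follows from the same disjointness assumption. The content of the lemma lies not in its proof but in how it will be used later: the independent-set property converts the search for a sumset-consistent labeling into a combinatorial constraint on $\Gamma_{\F_2^n}(D)$, which the subsequent arguments in \Cref{sec:sumset-certifications} will exploit to build small $0$-certificates.
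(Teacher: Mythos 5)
Your proof is correct and takes essentially the same approach as the paper: both arguments simply observe that an edge $\{x,y\}$ with $x,y\in A$ would force $x+y\in (A+A)\cap D=S\cap D$, contradicting $S\cap D=\emptyset$ (the paper phrases this as a proof by contradiction, you phrase it directly, but the content is identical). Your aside about self-loops is a fine extra remark but not a departure from the paper's reasoning.
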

\begin{proof}
    Assume for contradiction that~$A$ is not an independent set in~$\Gamma_{\F_2^n}(D)$. This implies the existence of $x, y \in A$ with $x + y = s$ for some $s \in D$. This implies $s \in D \cap (A+A) = D \cap S$, which is a contradiction.
\end{proof}

\begin{lemma}\label{lem:zerocoset}
    Let~$V\subseteq \F_2^n$ be a linear subspace of~$\F_2^n$, and let~$x_1,\ldots, x_{2^{n-\dim(V)}}$ be a choice of representatives from all cosets of~$\F_2^n \setminus V$.
    Let~$A\subseteq \F_2^n$ be an arbitrary subset and denote by~$A_i:= A\cap \left(V+x_i\right)$ the intersection of~$A$ with the~$i$-th coset of~$V$. Then, 
    \[
    (A+A)\cap V = \bigcup_{i=1}^{2^{n-\dim(V)}} (A_i + A_i)
    .
    \]
\end{lemma}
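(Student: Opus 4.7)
The statement is a partition-style identity, so I would prove it by showing both inclusions directly from the definitions. The key observation that drives both directions is that when we work in $\F_2^n$, for any $v_1,v_2 \in V$ and any coset representatives $x_i,x_j$, we have
\[
(v_1 + x_i) + (v_2 + x_j) \;=\; (v_1 + v_2) + (x_i + x_j),
\]
which lies in $V$ if and only if $x_i + x_j \in V$, and since the $x_i$'s are representatives of distinct cosets of $V$, this happens if and only if $i = j$. This single fact handles both containments.

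For $\supseteq$, I would take any $s \in A_i + A_i$, write $s = a + a'$ with $a,a' \in A \cap (V + x_i)$, and apply the observation above with $i = j$ to conclude $s \in V$; the containment $s \in A + A$ is immediate since $A_i \subseteq A$.

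For $\subseteq$, I would take $s \in (A + A) \cap V$ and write $s = a + a'$ with $a \in A \cap (V + x_i)$ and $a' \in A \cap (V + x_j)$ for the unique cosets containing them. Since $s \in V$, the observation forces $x_i + x_j \in V$, hence $i = j$, so both $a$ and $a'$ lie in $A_i$, giving $s \in A_i + A_i$.

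\textbf{Main obstacle.} There is essentially no technical difficulty here; the only thing to be careful about is being explicit that the $x_i$'s are representatives of \emph{distinct} cosets, so that $x_i + x_j \in V \Leftrightarrow i = j$. Once that is spelled out, both inclusions are one line each.
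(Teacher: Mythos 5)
Your proof is correct and is essentially the same argument as the paper's: both directions hinge on the observation that a sum of two elements from cosets $V+x_i$ and $V+x_j$ lies in $V$ if and only if $x_i + x_j \in V$, i.e., $i = j$. If anything, your write-up is slightly cleaner — the paper's version contains a small notational slip (writing $A_i + A_j = V$ and $A_i = x_i + V$, where $A_i$ is really $A \cap (V+x_i)$, not the full coset) that your explicit decomposition $a = v + x_i$ avoids.
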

\begin{proof}
    Any element of $A_i + A_i$ belongs to $(A+A) \cap V$, as any two elements of $A_i$ are in the same coset of $V$ by definition.

    It remains to show that any element of $(A+A) \cap V$ is contained in $A_i + A_i$ for some $i \in [2^{n - \dim(V)}]$.  Fix an element $y \in (A+A) \cap V$. Thus $y = a_i + a_j$ for some $a_i \in A_i$ and $a_j \in A_j$. Note that $y \in V$ implies $A_i + A_j = V$, which in turn implies $x_i + x_j \in V$; as $x_i, x_j$ are coset representatives of $V$, this implies that $A_i = x_i + V = x_j + V = A_j$. We thus have that $y \in A_i + A_i$ for some $i$.
\end{proof}

\begin{lemma}
    \label{lemma:not-in-noisedS}
    Let~$S\subseteq \F_2^n$ and~$\varepsilon\in (0,\frac{1}{2}]$. It holds for every~$s\in \F_2^n$ that~$\Pr\left[s\notin \bN_\eps(S)\right]\geq \varepsilon$.
\end{lemma}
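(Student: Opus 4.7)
The plan is to unpack the definition $\bN_\eps(S) = S \triangle \bR_\eps$ and split into two cases according to whether the fixed element $s \in \F_2^n$ belongs to $S$ or not. Since $\bR_\eps$ includes each point of $\F_2^n$ independently with probability $\eps$, membership of $s$ in $\bN_\eps(S)$ is controlled by a single independent Bernoulli trial, so the analysis reduces to computing that Bernoulli probability in each of the two cases.

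First I would fix $s \in \F_2^n$ and consider the event $\{s \notin \bN_\eps(S)\}$. In the case $s \in S$, symmetric difference flips membership, so $s \notin \bN_\eps(S)$ if and only if $s \in \bR_\eps$, which by definition occurs with probability exactly $\eps$. In the complementary case $s \notin S$, membership in $\bN_\eps(S)$ coincides with membership in $\bR_\eps$, so $s \notin \bN_\eps(S)$ if and only if $s \notin \bR_\eps$, which occurs with probability $1-\eps$. Combining the two cases gives
\[
\Pr[s \notin \bN_\eps(S)] \;=\; \begin{cases} \eps & \text{if } s \in S, \\ 1-\eps & \text{if } s \notin S, \end{cases}
\]
and in both cases this is at least $\eps$, where in the second case the hypothesis $\eps \leq 1/2$ is what ensures $1-\eps \geq \eps$.

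There is essentially no obstacle in this argument; the role of the assumption $\eps \leq 1/2$ is only to guarantee that the ``worse'' case (namely, $s \in S$) is the one providing the tight bound $\eps$, rather than the flipping probability dominating. Hence the entire proof amounts to invoking the definition of $\bN_\eps$ and a two-line case analysis.
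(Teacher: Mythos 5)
Your proof is correct and follows exactly the paper's approach: a two-case analysis on whether $s\in S$, unpacking $\bN_\eps(S)=S\triangle\bR_\eps$ to get probabilities $\eps$ and $1-\eps$, and using $\eps\leq 1/2$ in the second case. No differences worth noting.
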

\begin{proof}
    Fix $S \subseteq \F_2^n$, $\epsilon \in (0, 1/2]$, and $s \in \F_2^n$. If $s \in S$, then $\Pr[s\notin \bN_\eps(S)] = \epsilon$; if $s \not\in S$, then $\Pr[s\notin \bN_\eps(S)] = 1 - \epsilon \geq \epsilon$.
\end{proof}

The following Lemma follows from the proofs in~\cite{ALON20131232}, we present a self-contained proof for it in \Cref{appendix:indep-number}.
\begin{lemma}
    \label{lem:indep-number}
    The independence number of $\Gamma_{\F_2^n}(\bR_\epsilon)$ is $O(n^2 / \epsilon^2)$ with probability $1 - o_n(1)$.
\end{lemma}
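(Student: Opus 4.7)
The plan is to use a standard first-moment argument. Observe that a fixed $k$-subset $I\subseteq \F_2^n$ is independent in $\Gamma_{\F_2^n}(\bR_\epsilon)$ if and only if $\bR_\epsilon\cap\bigl((I+I)\setminus\{0\}\bigr)=\emptyset$, and since the elements of $\bR_\epsilon$ are included independently, this event has probability $(1-\epsilon)^{|(I+I)\setminus\{0\}|}$. A union bound then gives
\[
\Pr\bigl[\alpha(\Gamma_{\F_2^n}(\bR_\epsilon))\geq k\bigr]\;\leq\;\sum_{|I|=k}(1-\epsilon)^{|(I+I)\setminus\{0\}|},
\]
and I would aim to show this is $o_n(1)$ for $k=Cn^2/\epsilon^2$.

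Since $|I+I|$ can be as small as $|I|$ (attained when $I$ is an affine subspace), a uniform estimate on the exponent is not enough, so I would split the sum based on the doubling $K:=|I+I|/|I|$. For ``unstructured'' $I$ with $|I+I|\geq t_0:=\Theta(nk/\epsilon)$, the naive bound $\binom{2^n}{k}(1-\epsilon)^{t_0-1}\leq 2^{nk-\epsilon t_0/\ln 2}$ is $o_n(1)$ by our choice of $t_0$. For ``structured'' $I$ with $|I+I|<t_0$, so $K=O(n/\epsilon)$, I would invoke a Freiman--Ruzsa-type theorem to conclude $I\subseteq V$ for some affine subspace $V\subseteq\F_2^n$ with $|V|\leq K^{O(1)}|I|$, i.e.\ of dimension $d=O(\log(n/\epsilon))$. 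Counting structured sets as (number of affine subspaces of dimension $d$, at most $2^{O(dn)}$) times (number of $k$-subsets of each, at most $\binom{|V|}{k}$) and multiplying by the independence probability $(1-\epsilon)^{k-1}\leq e^{-\epsilon k}$, I would verify that the resulting exponent is negative for $k=Cn^2/\epsilon^2$, completing the bound.

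The main obstacle lies in the structured case: since the doubling $K=\Theta(n/\epsilon)$ is polynomial rather than constant, classical Freiman--Ruzsa results (with exponential dependence on $K$) give vacuous bounds on $|V|$. One option is to invoke the polynomial Freiman--Ruzsa theorem, but a self-contained argument in the spirit of \cite{ALON20131232} could instead observe that, with probability $1-o_n(1)$, every affine subspace $V\subseteq\F_2^n$ with $|V|=\Omega(n\log(n/\epsilon)/\epsilon)$ meets $\bR_\epsilon$ (by union-bounding the $2^{O(n\log(n/\epsilon))}$ relevant subspaces against the miss probability $(1-\epsilon)^{|V|}$). Since $I+I$ equals a linear subspace whenever $I$ is an affine subspace, this immediately precludes subspace-type independent sets larger than that threshold; extending this to more general structured $I$ via extraction of a subspace-like substructure in $I+I$ is the delicate piece that must be balanced against the unstructured-case bound to yield the claimed $O(n^2/\epsilon^2)$ upper bound.
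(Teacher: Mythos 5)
Your first-moment setup is correct, and you correctly identify where it runs into trouble: the structured (small-doubling) case. But that case is not actually resolved in your proposal — you say as much yourself (``the delicate piece that must be balanced''). The trouble is real: classical Freiman--Ruzsa over $\F_2^n$ has exponential dependence on the doubling constant $K$, and with $K = \Theta(n/\epsilon)$ that is vacuous. Even the polynomial Freiman--Ruzsa theorem does not directly give you $I \subseteq V$ for a subspace $V$ of size $K^{O(1)}|I|$; it gives a cover of $I$ by $\mathrm{poly}(K)$ cosets of a subgroup of size $\leq |I|$, and the smallest affine subspace containing that union can still be exponentially large in $\mathrm{poly}(K)$. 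Your fallback idea (union-bound affine subspaces against the miss probability of $\bR_\epsilon$) handles only the case where $I$ is literally an affine subspace, not sets that merely have small doubling, and you acknowledge you don't know how to close that gap.

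The paper avoids the doubling dichotomy entirely with a cleaner extraction lemma: \emph{every} non-empty $A \subseteq \F_2^n$ contains a subset $A' \subseteq A$ of size $|A'| \geq \sqrt{|A|}$ with $|A' + A'| \geq \frac{1}{4}|A'|^2$ (built greedily, adding a point $x$ whenever $A'+x$ contributes at least $\frac{1}{2}(|A'|+1)$ new sums; an averaging argument shows such an $x$ exists as long as $|A'| \leq \sqrt{|A|}$). Since a subset of an independent set is independent, an independent set of size $k = 25n^2/\epsilon^2$ in $\Gamma_{\F_2^n}(\bR_\epsilon)$ would contain an independent subset $A'$ of size $5n/\epsilon$ with $|A'+A'| \geq \frac{1}{4}(5n/\epsilon)^2$. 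For such an $A'$, the miss probability is at most $(1-\epsilon)^{\frac{1}{4}(5n/\epsilon)^2}$, and a union bound over the at most $2^{n \cdot 5n/\epsilon}$ sets of that size kills the expectation. This gives a self-contained elementary proof, with no need for Freiman--Ruzsa or any structural classification of small-doubling sets. If you want to salvage your approach, the ``many sums'' extraction is exactly the tool that replaces the structured-case analysis you couldn't complete.
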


\begin{lemma}
    \label{lem:noisyset-root-bound}
    Let~$V\subseteq \F_2^n$ be a linear subspace of dimension~$d$, and let~$S\subseteq \F_2^n$ be an arbitrary set. The number of different sets~$A\subseteq \F_2^n$ with~$\bN_\epsilon(S) \cap V = (A+A)\cap V$ is with probability at least $1 - o_n(1)$ at most
    \[
        2^{2^{n-d} O(n^3 / \epsilon^2)}.
    \]
\end{lemma}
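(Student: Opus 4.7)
My plan is to use the coset decomposition of \Cref{lem:zerocoset} to reduce counting compatible roots $A$ to counting independent sets in a Cayley sum graph on $V$, and then to control those via \Cref{lem:indep-number}.

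Set $T := \bN_\eps(S)\cap V$ and $D := V\setminus T$, and fix coset representatives $x_1,\ldots,x_{2^{n-d}}$ of $V$ in $\F_2^n$. Given any $A$ with $(A+A)\cap V = T$, I write $A_i := A\cap(V+x_i)$ and $B_i := A_i+x_i\subseteq V$; since $x_i+x_i=0$ in $\F_2^n$, we have $A_i+A_i = B_i+B_i$. \Cref{lem:zerocoset} then gives $\bigcup_i (B_i+B_i) = T$, so $(B_i+B_i)\cap D=\emptyset$ for every $i$. Arguing exactly as in \Cref{lem:cayley} but inside the subgroup $V$, each $B_i$ is an independent set of the Cayley sum graph $\Gamma_V(D)$.

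To bound $|B_i|$, I next control the independence number $\alpha(\Gamma_V(D))$. By \Cref{lemma:not-in-noisedS}, each $s\in V$ independently belongs to $D$ with probability at least $\eps$, so a standard monotone coupling yields a random $D^\ast\subseteq D$ in which each element of $V$ is included independently with probability exactly $\eps$. Since $D^\ast\subseteq D$, every independent set of $\Gamma_V(D)$ remains independent in $\Gamma_V(D^\ast)$; applying \Cref{lem:indep-number} inside $V\cong \F_2^d$ then gives $|B_i|\leq K := O(n^2/\eps^2)$ for every $i$ with probability $1-o_n(1)$. On this event each $B_i$ takes at most $\binom{2^d}{\leq K}\leq 2^{dK+O(\log K)}$ values, and since $A$ is determined by the tuple $(B_1,\ldots,B_{2^{n-d}})$ via $A_i=B_i+x_i$, the total number of compatible $A$'s is at most
\[
\left(2^{dK+O(\log K)}\right)^{2^{n-d}}\;=\;2^{\,d\cdot 2^{n-d}\cdot O(n^2/\eps^2)}\;\leq\;2^{\,2^{n-d}\cdot O(n^3/\eps^2)},
\]
using $d\leq n$ in the last step.

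The main subtlety I expect lies in the coupling, since the distribution of $D$ on $V$ is not i.i.d.~$\mathrm{Ber}(\eps)$: elements of $S\cap V$ fall in $D$ with probability $\eps$, while those of $V\setminus S$ fall in $D$ with probability $1-\eps\geq \eps$. The uniform lower bound of $\eps$ on these marginals is exactly what enables the monotone coupling down to $\bR_\eps$ on $V$ and makes \Cref{lem:indep-number} applicable. A minor additional point is that \Cref{lem:indep-number} is stated for $\F_2^n$; I would observe that the proof in \Cref{appendix:indep-number} carries over verbatim to any ambient dimension $d$, falling back on the trivial bound $\alpha(\Gamma_V(D))\leq 2^d$ in the regime where $d$ is so small that the concentration in \Cref{lem:indep-number} degrades.
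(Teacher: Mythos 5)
Your proof is correct and follows essentially the same route as the paper's: decompose $A$ by cosets of $V$, bound each piece by the independence number of a Cayley sum graph whose generator set is $V\setminus \bN_\eps(S)$, invoke \Cref{lem:indep-number} via a monotone coupling, and multiply. The two extra precautions you take — translating $A_i$ to $B_i = A_i + x_i \subseteq V$ so that the independence argument genuinely lives in $\Gamma_V(D)$ rather than $\Gamma_{\F_2^n}(\bK)$, and spelling out that \Cref{lem:indep-number} (and its proof) carries over to ambient dimension $d\le n$ with the same $O(n^2/\eps^2)$ bound and $o_n(1)$ failure probability — are worthwhile tidyings of points the paper leaves implicit, though the final fallback to $\alpha\le 2^d$ for small $d$ is unnecessary, since the union-bound calculation in \Cref{alem:manysumsIS} with $k=5n/\eps$ already works verbatim for all $d\le n$.
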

\begin{proof}
    Fix $S$ and $V$ as in the lemma statement, and let~$x_1,\ldots, x_{2^{n-d}}$ be a choice of representatives from all cosets of~$\F_2^n / V$. We fix $i \in [2^{n-d}]$ arbitrarily and consider possibilities for $A_i \subseteq V + x_i$ such that $A_i + A_i \subseteq {\bN_{\eps}(S)}$.

    Let $\bK$ denote the random set $\bK := V\setminus \bN_{\eps}(S)$.
    % $\bN_\epsilon(S) \cap V$.
    By \Cref{lemma:not-in-noisedS}, $\bK$ includes each point $s \in \mathbb{F}_2^n$ independently with probability at least $\epsilon$.
    %the expected value of $|\bK|$ is at least $\epsilon |V|$.
    %; moreover, since the appearance of each point $s \in \mathbb{F}_2^n$ is independent, we have $|\bK| \geq \epsilon |V| / 2$ with probability $1 - 2^{-\Omega(\epsilon 2^d)}$ via a Chernoff bound.
    By \Cref{lem:cayley}, in order that $A_i + A_i \subseteq {\bN_{\eps}(S)}$.\ it must be true that $A_i$ is an independent set in $\Gamma_{\mathbb{F}_2^n}(\bK)$. 
    %Conditioning on the event that $|\bK| \geq \epsilon |V| / 2$, 
    By \Cref{lem:indep-number}, the independence number of $\Gamma_{\mathbb{F}_2^n}(\bK)$ is $O(n^2 / \epsilon^2)$ with probability $1 - o_n(1)$. (Although $\bK \neq \bR_\epsilon$, \Cref{lem:indep-number} applies because every point appears independently in $\bK$ with probability at least $\epsilon$.)

    Union-bounding over every subset of $V+x_i$ of size $O(n^2 / \epsilon^2)$, we have that with probability $1 - o_n(1)$, the number of possibilities for $A_i$ is at most:
    \[
        \binom{2^d}{ O(n^2 / \epsilon^2) } = 2^{ O(d n^2 / \epsilon^2)}.
    \]
    By \Cref{lem:zerocoset}, any set $A$ satisfies $(A+A) \cap V = \cup_{i \in [2^{n - d}]} (A_i + A_i)$, so the total number of possibilities for $A$ with $A+A = S$ is bounded by 
    \[
        2^{2^{n-d} O(n^3 / \epsilon^2)},
    \]
    as claimed.
\end{proof}

\begin{lemma}
    \label{lem:S-eps-far-bound}
    For any family $\calF$ of subsets of $\F_2^n$, there exists an algorithm that with $O(\frac{2}{\varepsilon}\log |\calF|)$ queries to $\calO_S$ can distinguish between the case that~$S\in \calF$ and the case that~$S$ is~$\varepsilon$-far from every set in~$\calF$ with probability $1 - 1 / |\calF|$.
\end{lemma}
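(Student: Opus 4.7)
The plan is to use a straightforward sample-and-check argument: draw $m = \Theta(\log|\calF|/\eps)$ uniform random points from $\F_2^n$, query $\calO_S$ on each of them, and then output ``$S \in \calF$'' if and only if there exists some $T \in \calF$ whose membership function agrees with $\calO_S$ on every sampled point. Since the algorithm is given the family $\calF$ explicitly, the consistency check is performed internally and does not require additional queries.

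For the analysis, first I would handle the completeness case: if $S \in \calF$, then $T := S$ agrees with $\calO_S$ on every sample, so the algorithm always outputs ``$S \in \calF$''. For the soundness case, fix any particular $T \in \calF$ with $\dist(S,T) \geq \eps$. Since a uniformly random $\bx \in \F_2^n$ satisfies $\calO_S(\bx) = \calO_T(\bx)$ with probability at most $1-\eps$, and the $m$ samples are independent, the probability that $T$ agrees with $\calO_S$ on all sampled points is at most $(1-\eps)^m \leq e^{-\eps m}$. Choosing $m = \lceil 2 \ln|\calF|/\eps \rceil$, which is $O\!\left(\tfrac{1}{\eps}\log|\calF|\right)$, gives failure probability at most $e^{-2\ln|\calF|} = 1/|\calF|^2$ per set. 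A union bound over the at most $|\calF|$ sets $T \in \calF$ that are $\eps$-far from $S$ then yields an overall failure probability of at most $1/|\calF|$, as required.

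There is essentially no obstacle in this argument; it is a textbook Chernoff/union-bound computation. The only mild subtlety is making sure the constants line up so that the stated $O(\tfrac{2}{\eps}\log|\calF|)$ query bound (with ``$\log$'' interpreted as natural log up to constants) matches the exponent needed in the union bound. The algorithm is non-adaptive, and its running time beyond the $m$ queries is $O(m \cdot |\calF|)$ for the explicit consistency checks, which is not part of the stated guarantee but is worth noting for downstream applications.
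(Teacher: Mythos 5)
Your proof is correct and matches the paper's argument essentially verbatim: sample $\Theta(\log|\calF|/\eps)$ uniform points, accept iff some $F\in\calF$ is consistent with all answers, bound each far set's survival probability by $(1-\eps)^m\leq 1/|\calF|^2$, and union-bound over $\calF$. Your remarks on the log base and the non-adaptive nature of the test are accurate observations, not deviations.
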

\begin{proof}
    Our algorithm queries $\frac{2}{\epsilon} \log |\calF|$ points selected uniformly at random from $\F_2^n$ and accepts if there exists $F \in \calF$ which is consistent with $S$ on every point. Clearly, the algorithm accepts if $S \in \calF$. It remains to show that we reject if $S$ is $\epsilon$-far from every set in $\calF$.

    In this case, for every set $F$ in $\calF$, we prove $S \neq F$ with probability $\epsilon$ every time we make a query. The chance we fail to prove $S \neq F$ after all queries is thus 
    \[
        (1 - \epsilon)^{\frac{2}{\epsilon} \log{|F|}} \leq e^{- 2\log{|F|}} = 1 / |F|^2
    \]
    by Bernoulli's inequality. Union-bounding over all $F \in \calF$ gives the result.
\end{proof}

\begin{proof}[Proof of \Cref{thm:0cert_noised}]
    Our sumset $0$-certificate for $\bN_\epsilon(S)$ consists of an arbitrary subspace $V$ of dimension $d$, plus an additional $\frac{2}{\epsilon} 2^{n-d} O(n^3 / \epsilon^2)$ points drawn uniformly at random.

    By \Cref{lem:noisyset-root-bound}, with probability at least $1 - o_n(1)$ the number of sets $A$ such that the hypothesis $A+A = S$ is consistent with queries to each element in $V$ is $2^{2^{n-d} O(n^3 / \epsilon^2)}$. By \Cref{lem:S-eps-far-bound}, the remaining random points disprove each possible root $A$ with probability $1 - 2^{-\Omega(n)}$. Setting $d = n/2 + 1.5 \log(n / \epsilon)$ minimizes the size of the certificate at $2^{n/2} \cdot O(n^{1.5}/\epsilon^{1.5})$.
\end{proof}

\subsection{On~$1$-certificates}

In light of \Cref{def:0cert}, it is tempting to consider ``sumset 1-certificates,'' where informally, a sumset 1-certificate for a set $S \subseteq \F_2^n$ is a set $D \subseteq \F_2^n$ such that querying the oracle ${\cal O}_S$ on every point in $D$ suffices to prove that $S$ \emph{is} a sumset.  More precisely,  let us say that a a set $D \subseteq \F_2^n$ is a \emph{sumset 1-certificate for} $S \subseteq \F_2^n$ if $S'$ is a sumset for every set $S' \subseteq \F_2^n$ which satisfies $S \cap D = S' \cap D.$  While this may seem like a natural definition, it turns out that sumset 1-certificates must always be extremely large:

\begin{observation} \label{obs:sumset-1-certificate-large}
If $D$ is a sumset 1-certificate for a set $S \subseteq \F_2^n$, then $|D| \geq 2^{n-1} - O(n^2)$.
\end{observation}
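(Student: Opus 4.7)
The plan is to prove \Cref{obs:sumset-1-certificate-large} by a short counting argument, leveraging the known upper bound on the total number of sumsets in $\F_2^n$ that is mentioned in the remark preceding \Cref{prop:eligible-sumset-count} (proved in \Cref{appendix:count} and implicit in \cite{Sargsyan15}). Specifically, translating that remark from $\F_2^{n+2}$ back to $\F_2^n$, the number of sumsets in $\F_2^n$ is at most $2^{2^{n-1} + O(n^2)}$. I will combine this with the trivial observation that being a sumset $1$-certificate forces \emph{every} completion of $S$ off of $D$ to yield a sumset.

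In more detail, suppose $D \subseteq \F_2^n$ is a sumset $1$-certificate for $S$. By the definition of a $1$-certificate, every set $S' \subseteq \F_2^n$ with $S \cap D = S' \cap D$ must be a sumset. The number of such completions $S'$ is exactly $2^{|\F_2^n \setminus D|} = 2^{2^n - |D|}$, since the bits of $S'$ on $\F_2^n \setminus D$ may be chosen arbitrarily and independently. Each of these $2^{2^n - |D|}$ distinct sets must appear among the at most $2^{2^{n-1} + O(n^2)}$ sumsets in $\F_2^n$, hence
\[
2^{2^n - |D|} \;\leq\; 2^{2^{n-1} + O(n^2)},
\]
which upon taking logarithms and rearranging yields $|D| \geq 2^n - 2^{n-1} - O(n^2) = 2^{n-1} - O(n^2)$, as claimed.

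The only real content here is the global bound on the number of sumsets in $\F_2^n$, which is already cited (and separately established) elsewhere in the paper; the remainder is a one-line pigeonhole. There is no genuine obstacle beyond verifying that the bound from \Cref{appendix:count} is indeed of the form $2^{2^{n-1} + O(n^2)}$ in the ambient group $\F_2^n$, which matches the statement quoted in the remark for $\F_2^{n+2}$.
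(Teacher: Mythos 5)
Your proof is correct and takes essentially the same approach as the paper: both count the $2^{2^n - |D|}$ completions of $S$ off of $D$, all of which must be sumsets, and compare against the bound of at most $2^{2^{n-1}+O(n^2)}$ sumsets in $\F_2^n$ from \Cref{prop:sumset-count}. The pigeonhole step is identical.
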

\begin{proof}
There are $2^{2^n - |D|}$ distinct sets $S' \subseteq \F_2^n$ that have $S' \cap D = S \cap D,$ and as shown in \Cref{prop:sumset-count} (see \Cref{appendix:count}) there are at most $2^{2^{n-1} + O(n^2)}$ sumsets in $\F_2^n$. So if $D < 2^{n-1} - O(n^2)$, then there are too many possibilities for $S'$ for all of them to be sumsets.
\end{proof}

% \section*{Acknowledgements}
% X.C. is supported by NSF grants IIS-1838154, CCF-2106429, and CCF-2107187. 
% S.N. is supported by NSF grants IIS-1838154, CCF-2106429, CCF-2211238, CCF-1763970, and
% CCF-2107187. 
% T.R. is supported by \red{TODO}. 
% R.A.S. is supported by NSF grants IIS-1838154, CCF-2106429, and CCF-2211238. 
% O.Z. is supported by \red{TODO}.
% This work was partially completed while some of the authors were visiting the Simons Institute for the Theory of Computing at UC Berkeley. 

\begin{flushleft}
\bibliographystyle{alpha}
\bibliography{allrefs}
\end{flushleft}

\appendix

%!TEX root = ../testing-sumsets.tex

\section{The Number of Sumsets in $\F_2^n$} \label{appendix:count}
\begin{proposition} \label{prop:sumset-count}
    The number of sumsets in $\F_2^n$ is at most 
    \[2^{2^{n-1} + O(n^2)}.\]
\end{proposition}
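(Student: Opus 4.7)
My plan is to count sumsets by exploiting a structural constraint on their additive roots. The key observation is: if $S = A + A$ and $t \notin S$, then $A$ and $A+t$ must be disjoint (otherwise $t = a_1 + a_2 \in A + A = S$, a contradiction). Equivalently, any root $A$ of $S$ is an independent set in the Cayley sum graph $\Gamma_{\F_2^n}(S^c)$; in particular, $|A| \leq 2^{n-1}$ whenever $S \neq \F_2^n$. The sumset $S$ is determined by $A$, so each sumset can be identified with (at least one) independent set in the Cayley graph of its complement.

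Given this, I would split the count based on $\langle S^c \rangle$, the linear span of $S^c$ in $\F_2^n$. If $\langle S^c \rangle \subsetneq \F_2^n$, then $S^c$ lies inside a hyperplane $V$, so $S$ contains the coset $\F_2^n \setminus V$ entirely and is determined by the pair $(V, S \cap V)$; this accounts for at most $(2^n - 1) \cdot 2^{2^{n-1}} = 2^{2^{n-1} + O(n)}$ sumsets. Otherwise $\langle S^c \rangle = \F_2^n$, so there exist $n$ linearly independent vectors $t_1, \ldots, t_n \in S^c$; for any such choice, the Cayley graph $\Gamma_{\F_2^n}(\{t_1, \ldots, t_n\})$ is isomorphic to the Boolean hypercube $Q_n$ via the change of basis identifying $t_i$ with the $i$-th standard basis vector. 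Fixing a canonical choice of basis and root per sumset yields an injection from these sumsets into the set of pairs $(\{t_i\}, A)$ with $\{t_i\}$ an $n$-subset of $\F_2^n$ and $A$ independent in $Q_n$, so the count in this case is at most $\binom{2^n}{n} \cdot i(Q_n) \leq 2^{n^2} \cdot i(Q_n)$, where $i(Q_n)$ denotes the number of independent sets in $Q_n$.

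The main obstacle is obtaining a sufficiently sharp bound on $i(Q_n)$. Invoking the Korshunov--Sapozhenko theorem, $i(Q_n) = (1 + o(1)) \cdot 2\sqrt{e} \cdot 2^{2^{n-1}} \leq 2^{2^{n-1} + O(1)}$, so the second case contributes at most $2^{n^2} \cdot 2^{2^{n-1} + O(1)} = 2^{2^{n-1} + O(n^2)}$ sumsets. Summing the two cases together with the single $S = \F_2^n$ yields the claimed bound. Note that the weaker Kahn entropy bound $i(Q_n) \leq 2^{2^{n-1}(1+1/n)}$ has an error term $2^{n-1}/n$ in the exponent that far exceeds the $O(n^2)$ slack permitted by the statement, so the sharper container-method argument of Sapozhenko is genuinely required to close the argument.
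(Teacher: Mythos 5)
Your proof is correct and takes essentially the same route as the paper: split on whether $\F_2^n \setminus S$ spans $\F_2^n$, handle the degenerate case by noting $S$ contains an affine hyperplane, and in the full-rank case map roots to independent sets in $Q_n$ via a change of basis, then invoke the sharp count of independent sets in the hypercube (the paper cites Galvin's survey of the same Korshunov--Sapozhenko result). Your side remark that Kahn's entropy bound is too weak here (its error term $2^{n-1}/n$ swamps the $O(n^2)$ slack) is accurate and a worthwhile observation, though not needed for the argument.
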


\begin{proof}

Consider a sumset $S = A+A$; we consider two cases depending on the linear rank of the set $\F_2^n\setminus S$.

\medskip

\noindent{Case 1: $\F_2^n\setminus S$ does not have full rank.} In other words, there exists a vector $v\in\F_2^n$ such that 
\[\abra{x,v} = 1 \qquad\text{implies that}\qquad x \notin S.\]
In particular, we have that $S \sse \{x\in\F_2^n : \abra{x,v} = 1\}$. As there are at most $2^n$ choices for $v$, and for each choice of $v$ there are at most $2^{2^{n-1}}$ choices for $S$, we have that there are at most $2^{2^{n-1} + n}$ many sumsets of this form.

\medskip

\noindent {Case 2: $\F_2^n\setminus S$ has full rank.} In particular, there are $n$ linearly independent vectors \emph{not} in $S$. For $v\in\F_2^n$, observe that the Cayley graph $\Gamma_{\F_2^n}(v)$ is a perfect matching. Next, note that if $v\notin S = A+A$, then $A$ must be an independent set in $\Gamma_{\F_2^n}(v)$. This is because, if $x, y \in A$ with $x + v = y$ then 
\[A + A \ni x + y = x+ x + v = v \notin A+A,\] 
which is a contradiction. As we have $n$ linearly independent vectors not in $S$, it follows that there exists an orthogonal transformation of $\F_2^n$ such that $A$ must be an independent set in the hypercube (where edges are incident to elements of $\F_2^n$ that differ in a single coordinate). As the number of independent sets in $\F_2^n$ is at most $2^{2^{n-1} + O(1)}$ (see for example \cite{galvin2019independent}), and as the number of orthogonal transformations of the hypercube is at most $2^{n^2}$, it follows that the total number of sumsets of this form is at most 
\[2^{2^{n-1} + n^2 + O(1)}.\]

\medskip

Both cases together complete the proof.
\end{proof}

\section{Proof of \Cref{lem:indep-number}}\label{appendix:indep-number}

\begin{definition}
    We say that a set~$A\subseteq \F_2^n$ has \emph{many sums} if~$|A+A|\geq \frac{1}{4} |A|^2$.
\end{definition}

\begin{lemma}\label{alem:manysums}
    Let~$A\subseteq \F_2^n$ be a non-empty set, then there exists~$A'\subseteq A$ such that~$|A'|\geq \sqrt{|A|}$ and~$A'$ has many sums.
\end{lemma}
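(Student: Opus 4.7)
The plan is to prove the lemma via the probabilistic method: sample $\bA' \subseteq A$ uniformly at random among subsets of size exactly $s := \lceil\sqrt{|A|}\rceil$, and argue that with positive probability $|\bA' + \bA'| \geq s^2/4$. The key reduction is Cauchy--Schwarz: setting $r_{A'}(x) := |\{(a,b) \in A'^2 : a + b = x\}|$ and $E(A',A') := \sum_x r_{A'}(x)^2$ (the additive energy), we have $|A'|^4 = (\sum_x r_{A'}(x))^2 \leq |A' + A'| \cdot E(A',A')$, so it suffices to show $\mathbb{E}[E(\bA',\bA')] < 4 s^2$; any realization with $E(A',A') \leq 4 s^2$ then satisfies $|A' + A'| \geq s^2/4$, and $|A'| = s \geq \sqrt{|A|}$ by construction.

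To bound the expected energy, the idea is to expand $E(\bA',\bA') = \sum_{(a,b,c,d) \in \bA'^4} \mathbf{1}[a+b+c+d = 0]$ (using characteristic $2$) and split the outer sum over $(a,b,c,d) \in A^4$ according to the coincidence pattern of the four coordinates. Only three patterns contribute, and the identity $a+b+c+d=0$ holds automatically in each: (i) $a=b$ and $c=d$; (ii) $(a,b,c,d) = (a,b,a,b)$ or $(a,b,b,a)$ with $a \neq b$; (iii) all four entries distinct with $a+b+c+d = 0$. Any pattern with exactly three distinct values (such as $(a,a,b,c)$) forces a contradiction like $b = c$ and hence contributes nothing. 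A direct count pins the contributions of (i) and (ii) at exactly $s^2$ and $2s(s-1)$ respectively, while for (iii), each ordered triple $(a,b,c)$ determines the fourth element $d = a+b+c$, so the number of such quadruples is at most $|A|(|A|-1)(|A|-2)$ and the expected contribution is at most $\frac{s(s-1)(s-2)(s-3)}{|A|-3}$.

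The main technical obstacle is controlling case (iii) tightly, since the expected energy threatens to hit exactly $4s^2$ asymptotically, leaving essentially no slack. The saving exploits that we sample \emph{without replacement} together with the fact that $|A| \geq (s-1)^2 + 1$ (a direct consequence of $s = \lceil \sqrt{|A|}\rceil$): a short polynomial comparison then yields $\frac{s(s-1)(s-2)(s-3)}{|A|-3} \leq (s-2)^2$ for every $s \geq 1$. Summing the three contributions gives
\[
    \mathbb{E}[E(\bA',\bA')] \;\leq\; s^2 + 2s(s-1) + (s-2)^2 \;=\; 4s^2 - 6s + 4 \;<\; 4s^2
\]
for every $s \geq 1$, so by averaging there exists a subset $A' \subseteq A$ of size $s$ with $E(A',A') < 4s^2$. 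Cauchy--Schwarz then delivers $|A' + A'| > s^2/4 = |A'|^2/4$, and since $|A'| = s \geq \sqrt{|A|}$, this $A'$ witnesses the lemma.
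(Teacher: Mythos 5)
Your proof is correct, but it takes a genuinely different route from the paper. The paper constructs $A'$ greedily: starting from a singleton, it repeatedly adds an element $x \in A$ that enlarges $A'+A'$ by at least $\frac{1}{2}(|A'|+1)$; an averaging argument (comparing $|A'+A'|$ against $|A|$) shows such an $x$ exists while $|A'|\leq\sqrt{|A|}$, and telescoping the increments gives $|A'+A'|\geq\frac{1}{4}|A'|^2$. You instead sample a size-$s$ subset without replacement, where $s=\lceil\sqrt{|A|}\rceil$, bound the expected additive energy by splitting $\{(a,b,c,d): a+b+c+d=0\}$ into coincidence patterns (a decomposition available precisely because we are in characteristic $2$), and convert a below-average-energy realization into a large sumset via Cauchy--Schwarz. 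I checked the delicate step: $s=\lceil\sqrt{|A|}\rceil$ forces $|A|\geq (s-1)^2+1$, hence $|A|-3\geq s^2-2s-1$, and $s(s-1)(s-3)\leq (s-2)(s^2-2s-1)$ holds because the right side exceeds the left by exactly $2$; so the case-(iii) bound $(s-2)^2$ is valid for $s\geq 4$ and vacuous for $s\leq 3$, and the total $4s^2-6s+4<4s^2$ leaves a genuine though narrow margin. The paper's greedy argument is deterministic and somewhat shorter; your version surfaces the additive-energy/Cauchy--Schwarz mechanism, which is the more standard lens in additive combinatorics and, incidentally, pins down a witness of size exactly $\lceil\sqrt{|A|}\rceil$.
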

\begin{proof}
    We construct~$A'$ iteratively in a greedy manner. 
    Starting from~$A'=\{a\}$ for an arbitrary~$a\in A$, as long as there is any~$x\in A\setminus A'$ such that~$\left(A'+x\right)\setminus \left(A'+A'\right) \geq \frac{1}{2} \left(|A'|+1\right)$ then we add~$x$ to~$A'$.

    Let~$a\in A'$.
    We observe that~$|A'+A'|\leq \frac{|A'|\cdot \left(|A'|-1\right)}{2}$, and thus for a uniformly chosen~$\bx\sim A$ we have
    $$\Pr[a+\bx \in A'+A']\leq \frac{|A'+A'|}{|A|} \leq \frac{|A'|\cdot \left(|A'|-1\right)}{2|A|}
    .$$
    Hence, for a uniformly chosen~$\bx\sim U(A)$ the expected number of elements~$a\in A'$ for which~$a+\bx\in A'+A'$ is at most~$|A'| \cdot \frac{|A'|\cdot \left(|A'|-1\right)}{2|A|} = \frac{|A'|-1}{2} \cdot \frac{|A'|^2}{|A|}$. 
    In particular, there exists such an outcome of~$x\in A$.
    If~$|A'|\leq \sqrt{|A|}$ then~$x+a\in A'+A'$ for at most~$\frac{|A'|-1}{2}$ elements~$a\in A'$. Each of the sums~$(x+a)$, for all~$a\in A'$, is unique as~$x$ is fixed. Thus, $\left(A'+x\right)\setminus \left(A'+A'\right) \geq \frac{1}{2} \left(|A'|+1\right)$.
    We conclude that the greedy process would not halt before~$A'$ is of size at least~$\sqrt{|A|}$.
    
    Finally, note that by definition of the greedy process we have
    $$
    |A'+A'| \geq \frac{1}{2}\left(1+2+\ldots+|A'|\right) \geq \frac{1}{4} |A'|^2
    .$$
\end{proof}

\begin{lemma}\label{alem:manysumsIS}
    The probability that~$\Gamma_{\F_2^n}(\bR_\epsilon)$ contains any independent set~$A$ that has many sums and is of size~$|A|\geq \frac{5n}{\varepsilon}$ is at most~$2^{-\Theta(n^2/\varepsilon)}$.
\end{lemma}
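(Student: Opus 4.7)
The plan is a first-moment / union bound over candidate independent sets. Fix $k \geq 5n/\varepsilon$ and a particular set $A \subseteq \F_2^n$ with $|A|=k$ that has many sums. Since $\F_2^n$ has characteristic $2$, the set of pairwise distinct sums $\{x+y : x, y \in A,\ x \neq y\}$ is exactly $(A+A)\setminus\{0\}$, which by the many-sums hypothesis has size at least $k^2/4 - 1 \geq k^2/5$ (the second inequality uses $k \geq 5$, which holds since $k \geq 5n/\varepsilon \geq 10$). By the definition of the Cayley sum graph, $A$ is an independent set in $\Gamma_{\F_2^n}(\bR_\varepsilon)$ if and only if $\bR_\varepsilon$ misses every element of $(A+A)\setminus\{0\}$, and since each element of $\F_2^n$ lies in $\bR_\varepsilon$ independently with probability $\varepsilon$, this happens with probability at most $(1-\varepsilon)^{k^2/5} \leq 2^{-c\varepsilon k^2}$, where $c = 1/(5 \ln 2)$ is an absolute constant.

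Next I would union-bound over the $\binom{2^n}{k} \leq 2^{nk}$ subsets of $\F_2^n$ of size $k$, obtaining a bound of $2^{nk - c\varepsilon k^2}$ on the probability that \emph{some} size-$k$ set with many sums is independent in $\Gamma_{\F_2^n}(\bR_\varepsilon)$. Viewing the exponent $g(k) := nk - c\varepsilon k^2$ as a downward parabola in $k$, its vertex lies at $k = n/(2c\varepsilon)$, which is well below $5n/\varepsilon$ for the fixed constant $c$ above. Hence $g$ is decreasing on $[5n/\varepsilon, \infty)$, and its maximum over the feasible range occurs at $k = 5n/\varepsilon$, where $g(k) = (5 - 25c) n^2/\varepsilon$. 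Since $25c = 5/\ln 2 > 5$, the value $5 - 25c$ is negative and of magnitude $\Omega(1)$, so the per-$k$ bound is $2^{-\Omega(n^2/\varepsilon)}$.

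Finally, summing over the at most $2^n$ possible sizes $k \in [5n/\varepsilon, 2^n]$ multiplies the bound by $2^n$, which is absorbed into the exponent $2^{-\Omega(n^2/\varepsilon)}$ since $n^2/\varepsilon \gg n$, yielding the claimed bound $2^{-\Theta(n^2/\varepsilon)}$. There is no deep obstacle here: the argument is essentially a counting bound exploiting the $\Omega(k^2)$ lower bound on $|A+A|$ provided by the many-sums hypothesis. The one place that needs care is in tracking the constant $c$ coming from the inequality $(1-\varepsilon) \leq e^{-\varepsilon}$, in order to verify that the quadratic ``penalty'' $c\varepsilon k^2$ indeed dominates the union-bound ``entropy'' $nk$ starting from $k = 5n/\varepsilon$; this is precisely why the constant $5$ appears in the hypothesis $|A| \geq 5n/\varepsilon$.
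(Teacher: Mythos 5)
Your proof is correct and follows essentially the same route as the paper's: a first-moment union bound over candidate independent sets with many sums, using the quadratic bound $|A+A|\geq |A|^2/4$ to get a per-set independence probability of $(1-\varepsilon)^{\Omega(k^2)}$ that overwhelms the $2^{nk}$ count once $k\geq 5n/\varepsilon$. The only difference is that the paper restricts attention to $k=\lceil 5n/\varepsilon\rceil$ exactly (which suffices for its application, since the greedy construction in the preceding lemma can be stopped at that size), whereas you sum over all $k\geq 5n/\varepsilon$ and thereby prove the lemma verbatim; both versions give $2^{-\Omega(n^2/\varepsilon)}$.
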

\begin{proof}
    Denote by~$k:=\lceil \frac{5n}{\varepsilon}\rceil$.
    For a specific set~$A\subseteq \F_2^n$ of size~$|A|=k$ which has many sums, the probability it is an independent set in~$\Gamma_{\F_2^n}(\bR_\epsilon)$ is at most~$(1-\varepsilon)^{\frac{1}{4}k^2}$.
    The number of sets in~$\F_2^n$ of size~$k$ is~${2^n \choose k}\leq 2^{nk}$, and in particular this is an upper bound for the number of such sets that also have many sums.
    Thus, the expected number of sets~$A$ of size~$k$ with many sums that are independent sets in~$\Gamma_{\F_2^n}(\bR_\epsilon)$, by union bound, is at most
    \[
    2^{nk} \cdot (1-\varepsilon)^{\frac{1}{4}k^2} =
    2^{k\left(n + \frac{1}{4} \log\left(1-\varepsilon\right) k\right)}
    <
    2^{k\left(n - \frac{\varepsilon}{4} k\right)}
    <
    2^{\frac{5n}{\varepsilon} \cdot \left(-\frac{n}{4}\right)}
    =
    2^{-\frac{5}{4\varepsilon}n^2}
    .\]
\end{proof}

\begin{proof}[Proof of \Cref{lem:indep-number}]
    By Lemma~\ref{alem:manysumsIS}, with probability at least~$1-2^{-\Theta(n^2/\varepsilon)}$ there is no independent set in~$\Gamma_{\F_2^n}(\bR_\epsilon)$ of size~$\geq \frac{5n}{\varepsilon}$ with many sums.
    By Lemma~\ref{alem:manysums}, if there is any independent set in~$\Gamma_{\F_2^n}(\bR_\epsilon)$ of size~$\geq \frac{25n^2}{\varepsilon^2}$ then it has a subset of size~$\geq \frac{5n}{\varepsilon}$ which is also an independent set but has many sums, which is a contradiction.
\end{proof}

\end{document}